\documentclass[11pt]{article}

\usepackage{amsmath,amsthm,amssymb}
\usepackage[margin=1in]{geometry}
\usepackage{graphicx}
\usepackage{hyperref}
\usepackage{mathtools}
\usepackage{enumerate}
\usepackage{enumitem}
\usepackage{titlesec}
\usepackage{datetime}
\usepackage{bbm}
\usepackage{caption}
\usepackage{subcaption}
\usepackage{float}
\usepackage{graphicx}
\usepackage[page]{appendix}
\usepackage{comment}
\usepackage{array}
\usepackage{booktabs}
\usepackage{dsfont}
\usepackage{color}
\usepackage{xcolor}
\usepackage{thmtools}
\usepackage{thm-restate}
\usepackage{regexpatch}

\makeatletter
\xpatchcmd\thmt@restatable{%
  \csname #2\@xa\endcsname\ifx\@nx#1\@nx\else[{#1}]\fi
}{%
  \ifthmt@thisistheone
    \csname #2\@xa\endcsname\ifx\@nx#1\@nx\else[{#1}]\fi
  \else
    \csname #2\@xa\endcsname[\textbf{re-stated}]
  \fi
}{}{}
\makeatother

\newlist{algosteps}{enumerate}{1}
\setlist[algosteps,1]{label=\textbf{Step \arabic*:}, left=0pt, itemsep=0.5ex}

\newcommand{\braket}[1]{\langle #1 \rangle}
\newcommand{\ket}[1]{| #1 \rangle}
\newcommand{\bra}[1]{\langle #1 |}

\definecolor{sammyPurple}{RGB}{128,0,128}

\theoremstyle{definition}
\newtheorem{theorem}{Theorem}[section]

\newtheorem{lemma}[theorem]{Lemma}
\newtheorem{corollary}[theorem]{Corollary}
\newtheorem{conjecture}[theorem]{Conjecture}
\newtheorem{definition}[theorem]{Definition}

\begin{document}

\title{Testing graph isomorphism using simple quantum circuits}
\author{Samuel Beasley\thanks{University of Bristol} ~and Ashley Montanaro\thanks{University of Bristol and Phasecraft Ltd.}}
\date{\today}

\maketitle

\begin{abstract}
    The graph isomorphism problem asks whether two graphs are equivalent up to a permutation of their vertices. Its precise computational complexity remains unresolved, and it is still unknown whether a polynomial-time algorithm exists. We investigate whether simple graph-dependent Instantaneous Quantum Polynomial-time (IQP) and Quantum Approximate Optimisation Algorithm (QAOA) circuits can distinguish non-isomorphic graphs. We first show that suitable IQP and QAOA constructions allow the underlying graph to be reconstructed using a polynomial number of samples. Building on previous work using the QAOA energy, we show that Hamming weight distributions can distinguish graph pairs at circuit depths for which the numerically computed energy gap is zero. We then apply graph invariants to the matrix of ZZ correlators on the output distribution. For the specific case of the graph spectrum, at QAOA depth $p=2$, this invariant numerically distinguishes all $10{,}486$ cospectral regular graph pairs tested on up to 18 vertices. For IQP circuits, we introduce a finer invariant based on probability multiplicities within Hamming weight sectors, which distinguishes every non-isomorphic graph pair tested numerically. Finally, we derive exact graph-theoretic characterisations of the IQP output distribution at $\theta=\pi/2$ and $\theta=\pi/4$. These results show that simple graph-dependent quantum circuits encode substantial structural information, while leaving open whether this information can be extracted efficiently for general graph isomorphism.

\end{abstract}

\section{Introduction}

Two graphs are said to be isomorphic if there exists a permutation of their vertices that preserves adjacency and non-adjacency. The graph isomorphism problem asks whether such a permutation exists for a given pair of graphs. Graph isomorphism and closely related graph-matching problems arise in a range of applications. In cheminformatics, molecular structures are naturally represented as graphs. Isomorphism testing and canonical labelling are used to compare, identify and catalogue chemical compounds~\cite{Faulon1998}. Graph matching is also widely used in pattern recognition and machine vision, where graphs provide a natural representation of relational and structural information~\cite{Conte2004}. Beyond these applications, graph isomorphism has been a long-standing challenge in algorithm design because its precise computational complexity remains unresolved.

For more than three decades, the best known general worst-case running time for graph isomorphism was $\exp\left(O(\sqrt{n\log n})\right)$, obtained from group-theoretic methods developed by Babai, Kantor and Luks~\cite{BabaiKantorLuks1983}. In a major breakthrough, Babai subsequently gave a quasipolynomial-time algorithm with running time $\exp((\log n)^{O(1)})$~\cite{Babai_2016}. Nevertheless, it remains unknown whether graph isomorphism can be solved in polynomial time.

Despite this unresolved worst-case complexity, graph isomorphism and the closely related problems of canonical labelling and automorphism-group computation can often be solved efficiently in practice. Individualisation-refinement methods form the basis of widely used software packages including \textsc{nauty}, \textsc{traces}, \textsc{bliss}, \textsc{saucy} and \textsc{conauto}~\cite{MCKAY201494}. However, specially constructed graph families can be significantly more difficult. In particular, Miyazaki constructed families of coloured, connected 3-regular graphs for which McKay's canonical labelling algorithm, as implemented in \textsc{nauty}, exhibits exponential behaviour~\cite{miyazaki1997complexity}. Such examples illustrate the distinction between strong practical performance on typical instances and the behaviour of graph-isomorphism algorithms on carefully constructed hard cases.

Graph isomorphism has also long been considered as a potential application of quantum computation. The problem can be reduced to an instance of the non-Abelian hidden subgroup problem over a group related to the symmetric group, so an efficient quantum algorithm for the relevant hidden subgroup problem would yield an efficient quantum algorithm for graph isomorphism~\cite{HallgrenRussellTaShma2003}. However, techniques that successfully solve Abelian hidden subgroup problems do not extend straightforwardly to this setting. In particular, strong Fourier sampling over the symmetric group is insufficient to solve graph isomorphism~\cite{MooreRussellSchulman2008}, and more general hidden-subgroup approaches require highly entangled measurements over many coset states~\cite{MooreRussellSniady2010}.

A separate line of research has investigated the use of quantum walks for graph isomorphism~\cite{DouglasWang2008,Gamble2010,Rudinger2012}. These methods can produce graph invariants that distinguish many non-isomorphic graphs, but they are not complete: particular pairs of non-isomorphic strongly regular graphs have been shown to remain indistinguishable under certain quantum-walk constructions~\cite{DouglasWang2008,Gamble2010}. More generally, Rudinger \emph{et al.} showed that non-interacting quantum walks with any fixed number of particles cannot distinguish all strongly regular graphs~\cite{Rudinger2012}. These limitations motivate the investigation of alternative quantum-based methods for extracting graph structure.

More recently, Mario Szegedy proposed the use of the Quantum Approximate Optimisation Algorithm (QAOA), originally constructed by Farhi, Goldstone and Gutmann in 2014~\cite{farhi_2014}, for graph structure discovery and isomorphism testing~\cite{szegedy_2019_qaoa}. Rather than optimising the circuit parameters to solve a combinatorial optimisation problem, Szegedy considered the QAOA energy evaluated at randomly chosen parameter sequences. Since isomorphic graphs produce identical energies for the same QAOA parameters, the energy can be treated as a graph invariant. At depth $p=1$, Szegedy showed that for a 3-regular graph the QAOA energy contains exactly the information given by the number of vertices and the number of triangles. At larger depths, however, Szegedy's numerical results indicate substantially greater distinguishing power: his numerical experiments showed that QAOA energies distinguish all non-isomorphic 3-regular graphs tested up to 18 vertices, all strongly regular graphs tested up to 26 vertices, as well as the Praust and smallest Miyazaki examples.

Szegedy's results also revealed an important limitation of using the QAOA energy alone. Szegedy considered pairs consisting of a circular ladder graph and the corresponding M\"obius ladder graph and found numerically that their average QAOA energy gap appears to decrease exponentially as the size of the graphs increases. Thus, even if two non-isomorphic graphs have different QAOA energies in principle, the separation may become too small to resolve efficiently in practice. This raises a natural question: does the QAOA output distribution contain additional graph information that can distinguish such hard instances at lower circuit depths than the energy alone?

In this work, we investigate this question by studying increasingly fine-grained information contained in the output distributions of simple quantum circuits associated with graphs. We consider both Instantaneous Quantum Polynomial-time (IQP) circuits~\cite{Shepherd_2009} and QAOA circuits~\cite{farhi_2014}. Our main results are as follows.

\begin{enumerate}
\item We first show that suitable IQP and QAOA constructions allow the underlying graph to be reconstructed using a polynomial number of samples. Thus, these circuit families do not discard the information required to determine graph isomorphism.

\item We show that certain graph pairs that cannot be distinguished using QAOA energies at low circuit depth can nevertheless be distinguished using other information contained in the output distribution. In particular, the Hamming weight distribution{, which can be estimated efficiently,} distinguishes several such pairs at depths for which the numerically computed QAOA energy gap is zero.

\item We introduce the ZZ correlation matrix $M_G$ obtained from the QAOA distribution. Under a permutation of the graph vertices, this matrix undergoes the corresponding simultaneous permutation of its rows and columns. Consequently, any function of $M_G$ that is invariant under such permutations is a graph invariant{, opening up the possibility of lifting any graph invariant to a possibly more expressive invariant based on the correlation matrix}. We focus on the graph spectrum and show numerically that at depth $p=2$ it distinguishes $10{,}486$ pairs of regular graphs that are cospectral with respect to their adjacency matrices. We further derive a finite-shot sample-complexity bound for estimating the average ZZ eigenvalue gap.

\item We study IQP distributions in greater detail. Although the Hamming weight distribution alone fails to distinguish some non-isomorphic graphs, we introduce a finer invariant based on the multiplicities of individual probabilities within each Hamming weight sector. This invariant distinguishes all non-isomorphic graph pairs tested numerically.

\item Finally, we characterise the IQP distribution in terms of the underlying graph structure for particular parameter choices. At $\theta=\pi/2$, the output is determined by the parity of the vertex degrees, while at $\theta=\pi/4$ the circuit is Clifford and its distribution can be described explicitly through its connection to the Hidden Linear Function problem of Bravyi, Gosset and K\"onig~\cite{Bravyi_2018}.
\end{enumerate}

For the numerical experiments on regular graphs, we use graph instances generated by Meringer's regular graph generator~\cite{meringer1999fast}. In particular, these instances are used to construct the cospectral regular graph pairs studied in Section~\ref{sec: QAOA ZZ matrix}. The graph instances used in our numerical experiments are available in the accompanying repository~\cite{Beasley_Python_Graph_collections}, allowing independent reproduction of the numerical experiments using the same graph instances.

\section{Preliminaries and notation}

A graph $G$ is described in terms of its vertices $V$ and edges $E$, $G=(V,E)$, where $|V|=n$ is the size of the graph, and each edge $(i,j)\in E$ satisfies $i,j\in V$. A graph $G$ can also be described in terms of its adjacency matrix.

\begin{definition}{\textbf{Adjacency matrix}}\label{def: Adjacency matrix}
    \\The adjacency matrix of a graph, denoted $A$, is the matrix with elements
    \begin{align*}
        A_{ij}=\begin{cases}
            1 &  \text{if } (i,j) \in E, \\
            0 &  \text{if } (i,j) \notin E.
        \end{cases}
    \end{align*}
\end{definition}

Formally, a graph isomorphism is defined as follows.~\cite{Grohe_2020}

\begin{definition}{\textbf{Graph Isomorphism}}\label{def: Graph isomorphism}
    \\An isomorphism from a graph $G = (V, E)$ to a graph $G' = (V', E')$ is a bijection $\pi:V \to V'$ such that $(u,v) \in E \Longleftrightarrow(\pi(u),\pi(v))\in E'$ for every $u,v \in V$. We denote isomorphism between graphs $G$ and $G'$ as $G\cong G'$.
\end{definition}

When $V=V'=\{1,\dots,n\}$, an isomorphism may therefore be identified with a permutation $\pi\in S_n$.

\subsection{IQP circuits}
IQP circuits are one of the candidates for showing that some computational tasks are easy for a quantum computer but difficult or impossible for classical computers to compute efficiently. {One example is certain problems concerning partition functions of the Ising model~\cite{Bremner_2017}.} 
A general expression {for the family of} IQP circuits we are investigating is found in~\cite{Shepherd_2009}. It is defined as ,
\begin{definition}{\textbf{General IQP Circuit}}\label{def: general-IQP} 
    \[
\exp\left(\sum_{\textbf{p}}i\theta_\textbf{p}\bigotimes_{j:p_j=1}X_j\right)\ket{0^n}
    \]
\end{definition}
where X is the Pauli X gate, $X = \ket{0}\bra{1} + \ket{1}\bra{0}$ and $(\theta_{\textbf{p}},\textbf{p})\in\mathbb{R}\times\mathbb{F}_2^n$.

We examine a subset of these circuits whose structure reflects that of an n-vertex simple, connected graph $G=(V,E)$. 
To code the structure of the graph into the circuit, each $v\in V$ corresponds to a qubit and each $(i,j)\in E$ corresponds to applying the gate $\exp(i\theta X \otimes X)$ between qubits $i$ and $j$. A simple example showing how a graph can be translated into a quantum circuit is shown in Figure~\ref{fig: Graph-Circuit Equivalence}

\begin{definition}{\textbf{IQP Circuit That Encodes a Simple Connected Graph}}\label{def: IQP circuit} 
\\For a graph $G=(V,E)$ and parameter $\theta\in \mathbb{R}$, 
\[
\ket{\operatorname{IQP}_G}\coloneqq \prod_{(i,j) \in E}\exp\left(i\theta X_i X_j\right)\ket{0^n}.
\]
\end{definition}

\begin{figure}[tb!]
    \centering
    \makebox[\textwidth]{%
        \includegraphics[width=1\textwidth, height=0.34\textwidth]{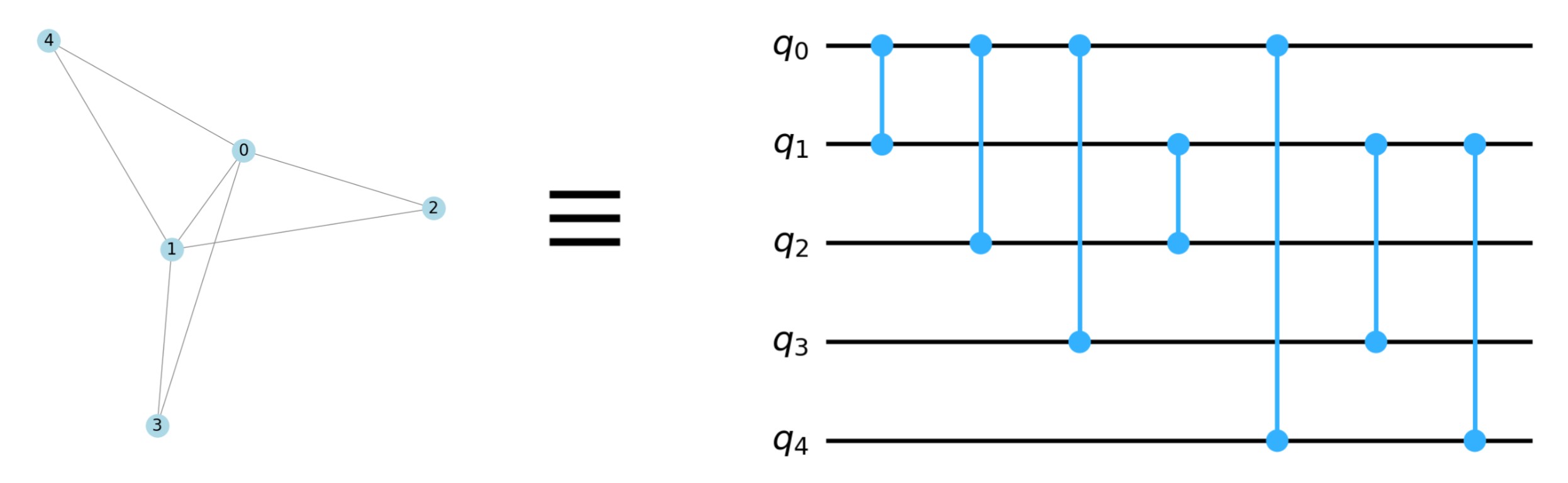}%
    }
    \caption{Simple example of equivalence between graphs and IQP circuits}
	\label{fig: Graph-Circuit Equivalence}
\end{figure}\vspace{2mm}

\subsection{QAOA circuits}

Quantum Approximate Optimisation Algorithm (QAOA) circuits, originally proposed by Farhi, Goldstone and Gutmann in 2014~\cite{farhi_2014}, are designed as a discretisation of the Quantum Adiabatic Algorithm. QAOA circuits are {often applied to} NP-hard optimisation problems such as Max-Cut, the Knapsack Problem and Job Scheduling. {Further, these circuits have been shown to be hard to simulate classically, under reasonable computational complexity assumptions}~\cite{farhi_2019}.

\begin{definition}{\textbf{General QAOA Circuit}}\label{def: General QAOA circuit}
\\The level p QAOA circuit with parameters $\gamma = (\gamma_0,\dots,\gamma_{p-1})$ and $\beta = (\beta_0,\dots,\beta_{p-1})$ is given by
\begin{align}
    \prod_{q=0}^{p-1}\left(\prod_{i =1}^ne^{-i\beta_{q}X_i}\prod_{\alpha=1}^me^{-i\gamma_{q}C_\alpha}\right)\cdot H^{\otimes n} \ket{0^n},
\end{align}
where $C(z) = \sum_{\alpha=1}^m C_\alpha(z).$ {and $C_{\alpha}(z)$ is a simple classical ``cost'' term that generally depends on at most a few input bits}. 
\end{definition}

In our case, we take inspiration from Mario Szegedy's {investigation} `What do QAOA energies reveal about graphs?'~\cite{szegedy_2019_qaoa} and explore these circuits in the context of solving the graph isomorphism problem.
Following Szegedy ~\cite{szegedy_2019_qaoa}, we associate a graph with a corresponding QAOA circuit by associating the cost term with each edge and the mixer term with each vertex.

\begin{definition}{\textbf{QAOA Circuit That Encodes a Simple Connected Graph}}\label{def: QAOA circuit}
\\For a graph $G=(V,E)$, the level p QAOA circuit with parameters $\gamma = (\gamma_0,\dots,\gamma_{p-1})$ and $\beta = (\beta_0,\dots,\beta_{p-1})$ is given by
\begin{align}
    \ket{\operatorname{QAOA}_G}\coloneqq \prod_{q=0}^{p-1}\left(\prod_{i\in V}e^{-i\beta_{q}X_i}\prod_{(j,k)\in E}e^{-i\gamma_{q}C_{(j,k)}}\right)\cdot H^{\otimes n} \ket{0^n},
\end{align}
where $C_{(j,k)} = \frac{1}{2}(I+Z_jZ_k).$
\end{definition}

In their original paper, Farhi, Goldstone and Gutmann show that, as the number of layers increases ($p\to\infty$), there exist parameters $\gamma, \beta$ for which QAOA approaches the optimum of the corresponding cost function~\cite{farhi_2014}. Their formulation considers maximising the cost function, whereas here we minimise the Min-Uncut cost
\[
C=\sum_{(j,k)\in E}C_{(j,k)}.
\]
These formulations are equivalent up to an additive constant and a sign change, since the Min-Uncut and Max-Cut cost Hamiltonians satisfy
\[
C_{\mathrm{MinUncut}}=|E|I-C_{\mathrm{MaxCut}}.
\]
Consequently, the highest-energy state in the Max-Cut formulation corresponds to the ground state of the Min-Uncut Hamiltonian used here. Thus, in the limit $p\to\infty$, there exist QAOA parameters for which the circuit output can approach the initially unknown ground-state subspace of $C$.

To calculate the cost $C$, we first run the circuit, then measure each qubit to obtain an output bitstring $x$. For each edge, $(j,k) \in E$, we get a contribution $x_j \odot x_k$ added to the cost, where $\odot$ is the XNOR operator (the NOT of the XOR). We sum these contributions {and average over many runs} to get an estimation of the {expected cost}. However, it is important to recognise that, as the size of the graph gets larger, {errors affecting the quantum circuit} can introduce inaccuracies~\cite{xue2019quantumnoise,Scriva2024shotnoise}.

\subsection{Distribution behaviour under permutations}

{In this section we state some useful results about the effect of graph permutations on QAOA and IQP circuits; see~\cite{shaydulin21} for a more detailed study of the effect of symmetries on QAOA output distributions.}

\begin{restatable}{lemma}{iqppermutationlemma}\label{lemma: IQP permuting nodes permutes bits}
\textbf{Permuting the vertices of a graph permutes the corresponding bits in the IQP circuit's distribution}
\\ Let $G'$ be the graph produced by applying permutation $\pi$ to the graph $G$. Let $U_\pi$ be the unitary operator that permutes qubits according to the permutation $\pi$ and let
    \[
    \prod_{(i,j) \in E}\exp\left(i\theta X_i X_j\right)\ket{0^n}=\ket{\text{IQP}_G}
    \]
    be the IQP circuit that corresponds to the graph $G$.
    Then the IQP circuit that corresponds to the graph $G'$ is 
    \[
    \ket{\text{IQP}_{G'}} = U_\pi \ket{\text{IQP}_G}.
    \]
    The IQP circuit's distribution that corresponds to $G'$ is the IQP circuit's distribution corresponding to $G$ but with its support permuted according to the permutation $\pi$.
\end{restatable}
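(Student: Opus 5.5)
The plan is to conjugate the IQP circuit for $G$ by the qubit-permutation unitary $U_\pi$ and to read off both claims from that identity. Fix the convention $U_\pi \ket{x_1 \dots x_n} = \ket{x_{\pi^{-1}(1)} \dots x_{\pi^{-1}(n)}}$, so that the qubit in position $i$ is moved to position $\pi(i)$. With this convention, $G'$ has edge set $E' = \{(\pi(i),\pi(j)) : (i,j)\in E\}$.

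\textbf{Conjugation of Pauli operators.} The first step is to verify that
\[
U_\pi X_i U_\pi^\dagger = X_{\pi(i)}
\]
on basis states, and hence that $U_\pi X_i X_j U_\pi^\dagger = X_{\pi(i)} X_{\pi(j)}$. Since conjugation by a unitary commutes with taking the exponential of an operator, this gives
\[
U_\pi \exp(i\theta X_i X_j) U_\pi^\dagger = \exp(i\theta X_{\pi(i)} X_{\pi(j)}).
\]
Conjugation is also multiplicative, so the same identity holds for the whole product over edges:
\[
U_\pi \Bigl(\prod_{(i,j)\in E} \exp(i\theta X_i X_j)\Bigr) U_\pi^\dagger = \prod_{(i,j)\in E} \exp(i\theta X_{\pi(i)} X_{\pi(j)}) = \prod_{(k,l)\in E'} \exp(i\theta X_k X_l).
\]

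\textbf{Gate ordering.} The only point needing a little care is the order of the factors in the product. All gates $X_k X_l$ commute with one another, so the order in which the edges of $E'$ are listed does not matter. The relabelled product is therefore exactly the circuit for $G'$ given in Definition~\ref{def: IQP circuit}.

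\textbf{The state identity.} Next, observe that $U_\pi^\dagger \ket{0^n} = \ket{0^n}$, because the all-zeros string is fixed by any permutation of qubits. Combining this with the conjugation identity gives
\[
\ket{\text{IQP}_{G'}} = U_\pi \Bigl(\prod_{(i,j)\in E} \exp(i\theta X_i X_j)\Bigr) U_\pi^\dagger \ket{0^n} = U_\pi \ket{\text{IQP}_G}.
\]

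\textbf{The distribution.} For the statement about distributions, take the amplitude at a basis string $y$:
\[
\bra{y} U_\pi \ket{\text{IQP}_G} = \bra{\pi^{-1}\cdot y} \text{IQP}_G\rangle,
\]
where $\pi^{-1}\cdot y$ denotes the bitstring obtained from $y$ by permuting its positions according to $\pi^{-1}$. It follows that $P_{G'}(y) = P_G(\pi^{-1}\cdot y)$; in other words, the distribution for $G'$ is the distribution for $G$ with its support permuted by $\pi$.

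\textbf{Main obstacle.} I expect no real difficulty here. The main care needed is consistency of the permutation convention, so that $\pi$ rather than $\pi^{-1}$ appears in both $E'$ and the bit relabelling, together with the commutativity remark that removes any dependence on gate ordering.
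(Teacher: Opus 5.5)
Your proof is correct and follows essentially the same route as the paper: conjugate each gate by $U_\pi$ using $U_\pi X_i U_\pi^\dagger = X_{\pi(i)}$ and $e^{UYU^\dagger}=Ue^{Y}U^\dagger$, use that $\ket{0^n}$ is invariant under $U_\pi^\dagger$, and read off $P_{G'}(y)=P_G(\pi^{-1}\cdot y)$ from the amplitudes. Your explicit permutation convention and your remark that the gates commute (so the ordering of edges in $E'$ is irrelevant) are small additions that the paper leaves implicit.
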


The proof is deferred to Appendix~\ref{appendix: proof of IQP permuting nodes permutes bits}.

\begin{restatable}{lemma}{qaoapermutationlemma}\label{lemma: QAOA permuting nodes permutes bits}
    \textbf{Permuting the vertices of a graph permutes the corresponding bits in the QAOA circuit's distribution}
    \\ Let $G'$ be the graph produced by applying a permutation $\pi$ to the graph $G$. Let $U_\pi$ be the unitary operator that permutes qubits according to $\pi$, and let
    \[
    \prod_{q=0}^{p-1}\left(\prod_{i\in V}e^{-i\beta_{q}X_i}\prod_{(j,k)\in E}e^{-i\gamma_{q}C_{(j,k)}}\right)\cdot H^{\otimes n} \ket{0^n}
    = \ket{\text{QAOA}_G}
    \]
    be the level $p$ QAOA circuit corresponding to $G$.\vspace{2mm}\\
    Then the QAOA circuit corresponding to $G'$ is
    \[
    \ket{\text{QAOA}_{G'}} = U_\pi \ket{\text{QAOA}_G}.
    \]
    In particular, the QAOA distribution corresponding to $G'$ is that of $G$ under the permutation $\pi$.
\end{restatable}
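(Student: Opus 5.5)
The plan is to conjugate every gate of the circuit by the qubit-permutation unitary and read off the circuit for $G'$. Fix the convention $U_\pi \ket{x_1\cdots x_n} = \ket{x_{\pi^{-1}(1)}\cdots x_{\pi^{-1}(n)}}$, so that qubit $i$ is moved to position $\pi(i)$. The first step is the conjugation rule for single-qubit operators: for any single-qubit operator $P$, we have $U_\pi P_i U_\pi^\dagger = P_{\pi(i)}$. This holds for $P \in \{X, Z, H\}$ and, since conjugation is an algebra homomorphism, for products and exponentials of such operators. In particular,
\[
U_\pi e^{-i\beta X_i} U_\pi^\dagger = e^{-i\beta X_{\pi(i)}}, \qquad U_\pi e^{-i\gamma C_{(j,k)}} U_\pi^\dagger = e^{-i\gamma C_{(\pi(j),\pi(k))}},
\]
where $C_{(j,k)} = \tfrac12(I+Z_jZ_k)$ and the same reasoning applies to $Z_jZ_k$. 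The initial state is also fixed by the permutation: $U_\pi H^{\otimes n}\ket{0^n} = H^{\otimes n} U_\pi \ket{0^n} = H^{\otimes n}\ket{0^n}$.

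The second step is to insert $U_\pi^\dagger U_\pi = I$ between every pair of consecutive gates in $U_\pi\ket{\text{QAOA}_G}$. Each gate is then replaced by its conjugate. The mixer layer $\prod_{i\in V} e^{-i\beta_q X_{\pi(i)}}$ is the same product over all vertices, because $\pi$ is a bijection of $V$ and the factors commute. The cost layer becomes $\prod_{(j,k)\in E} e^{-i\gamma_q C_{(\pi(j),\pi(k))}} = \prod_{(j',k')\in E'} e^{-i\gamma_q C_{(j',k')}}$, using the definition $E' = \{(\pi(j),\pi(k)) : (j,k)\in E\}$. Since all the $ZZ$ terms commute, the ordering within the cost layer is irrelevant. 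The ordering between layers is preserved exactly, so the result is precisely $\ket{\text{QAOA}_{G'}}$, for every depth $p$ and all parameters $\gamma,\beta$.

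The third step transfers the identity to the output distribution. We compute
\[
|\langle x|U_\pi|\psi\rangle|^2 = |\langle \pi^{-1}\cdot x|\psi\rangle|^2,
\]
so the probability of a bitstring under $G'$ equals the probability of the correspondingly permuted bitstring under $G$. Equivalently, the support of the distribution is permuted by $\pi$. This argument is the same as for Lemma~\ref{lemma: IQP permuting nodes permutes bits}, with the extra Hadamard layer and mixer handled above.

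The only real care point is the bookkeeping of the permutation direction, i.e.\ $\pi$ versus $\pi^{-1}$ in the definition of $U_\pi$ and in the relabelling of bitstrings. I will fix the convention once at the start and check it on a single $Z_i$ before applying it throughout. No other obstacle is expected.
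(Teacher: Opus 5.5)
Your proof is correct and matches the paper's argument: you conjugate each mixer and cost gate by $U_\pi$ (with $U_\pi X_iU_\pi^\dagger = X_{\pi(i)}$ and $U_\pi Z_iU_\pi^\dagger = Z_{\pi(i)}$), cancel the inserted $U_\pi^\dagger U_\pi$ factors, use that $H^{\otimes n}\ket{0^n}$ is permutation-invariant, and finally use $|\langle x|U_\pi|\text{QAOA}_G\rangle|^2 = |\langle \pi^{-1}(x)|\text{QAOA}_G\rangle|^2$. The one addition on your side is bookkeeping the paper leaves implicit: you reindex the products over $V$ and $E$ as products over $V$ and $E'$, which is valid because the factors within each layer commute.
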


The proof is deferred to Appendix~\ref{appendix: proof of QAOA permuting nodes permutes bits}.

\subsection{The Hamming weight distribution}

The \textbf{Hamming weight} of a bitstring $x$, denoted $|x|$, is defined as the number of non-zero bits in the string, $|x|=\sum_{i=1}^n x_i$. It is clear that the Hamming weight is invariant under permutations of the bits.


    

\begin{definition}{\textbf{The Hamming weight distribution}}\label{def: Hamming weight distribution}
\\Let $P:\{0,1\}^n \to [0,1]$ be a probability distribution, i.e.\ $\sum_{y \in \{0,1\}^n} P(y) = 1$. The Hamming weight distribution of $P$, denoted $P_{HW}:\{0,\dots,n\}\to[0,1]$, is defined by
\[
P_{HW}(k) = \sum_{y\in\{0,1\}^n:\,|y|=k}P(y), \qquad k \in \{0,\dots,n\}.
\]
\end{definition}
Since Hamming weight groups partition $\{0,1\}^n$,
\[
\sum_{k=0}^{n}P_{HW}(k) = \sum_{y\in\{0,1\}^n}P(y) = 1,
\]
so $P_{HW}$ is a valid probability distribution.

\begin{lemma}{\textbf{Isomorphic graphs produce identical Hamming weight distributions}} \label{lemma: HW dist invariance under perm}
\\ Let $G$ and $G'$ be isomorphic graphs related by some permutation of their vertices $\pi \in S_n$, and let $P_G$ and $P_{G'}$ denote their corresponding probability distributions, related by
\[
P_{G'}(y) = P_G(\pi(y)) \quad \text{for all } y \in \{0,1\}^n.
\]
Then for each $k \in \{0,\dots,n\}$,
\[
\sum_{y\in\{0,1\}^n:|y|=k}P_G(y) = \sum_{y\in\{0,1\}^n:|y|=k}P_{G'}(y).
\]
That is, $G$ and $G'$ produce identical Hamming weight distributions.
\end{lemma}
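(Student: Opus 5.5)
The plan is a change of variables: the permutation $\pi$ acting on bitstrings is a bijection of $\{0,1\}^n$ that maps each Hamming weight shell to itself. Here $\pi(y)$ means the bitstring whose coordinates are those of $y$ rearranged according to $\pi$, that is, $(\pi(y))_i = y_{\pi^{-1}(i)}$ or the analogous convention. The exact convention is immaterial, and it is the one under which Lemmas~\ref{lemma: IQP permuting nodes permutes bits} and~\ref{lemma: QAOA permuting nodes permutes bits} give $P_{G'}(y)=P_G(\pi(y))$.

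\textbf{Step 1: $\pi$ preserves Hamming weight.} Rearranging coordinates does not change how many of them equal $1$, so
\[
|\pi(y)| = \sum_{i=1}^n y_{\pi^{-1}(i)} = \sum_{j=1}^n y_j = |y|.
\]

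\textbf{Step 2: $\pi$ is a bijection of each weight-$k$ shell.} The map $y\mapsto\pi(y)$ is a bijection of $\{0,1\}^n$, with inverse induced by $\pi^{-1}$. By Step 1 it sends $S_k=\{y:|y|=k\}$ into $S_k$, and so does its inverse. Hence it restricts to a bijection $S_k\to S_k$.

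\textbf{Step 3: reindex the sum.} Starting from the right-hand side of the statement and substituting $z=\pi(y)$, which ranges over all of $S_k$ exactly once as $y$ does,
\[
\sum_{y\in S_k} P_{G'}(y) = \sum_{y\in S_k} P_G(\pi(y)) = \sum_{z\in S_k} P_G(z),
\]
which is the left-hand side. This holds for every $k$, so the Hamming weight distributions coincide.

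There is no real obstacle. The only point needing care is Step 2: the substitution requires that $\pi$ be a bijection of $S_k$ onto itself, not merely a map into it. This follows from invertibility together with Step 1.
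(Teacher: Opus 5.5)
Your proof is correct and takes the same approach as the paper: $\pi$ preserves Hamming weight, so it permutes the weight-$k$ bitstrings among themselves, and the sum $\sum_{|y|=k}P_G(\pi(y))$ can be reindexed as $\sum_{|y|=k}P_G(y)$. Your Step 2 spells out that $\pi$ restricts to a bijection of each shell, which the paper's one-line justification relies on implicitly.
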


\begin{proof}
Fix $k \in \{0,\dots,n\}$. Then,
\begin{align*}
    \sum_{y\in\{0,1\}^n:|y|=k}P_{G'}(y) &= \sum_{y\in\{0,1\}^n:|y|=k}P_{G}(\pi(y))\\
    &= \sum_{y\in\{0,1\}^n:|y|=k}P_{G}(y),
\end{align*}
where the second equality follows from invariance of the Hamming weight under permutations,
which allows us to rewrite the sum over permuted bitstrings $\{\pi(y) : |y|=k\}$ as a sum over $\{y:|y|=k\}$, since $\pi$ permutes bitstrings of weight $k$ among themselves. Therefore, the Hamming weight distributions of $G$ and $G'$ are identical.
\end{proof}

Note that the implication in Lemma~\ref{lemma: HW dist invariance under perm} goes only one way. It is not the case that identical Hamming weight distributions imply graph isomorphism. However, it does mean that if we have graphs whose Hamming weight distributions are distinct, we know that they cannot be isomorphic to each other.


\section{Graph reconstruction using QAOA and IQP circuits}

IQP and QAOA circuits are very simple families of quantum circuits, so a basic question one might ask is whether the output probability distributions obtained from them are sufficient to determine graph isomorphism, even in principle, or if these distributions do not include enough information to do so. In this section we show that the original graph can be retrieved from the corresponding IQP and QAOA output distributions. In particular, this allows for graph isomorphism to be determined in principle using these circuits, though not necessarily efficiently.

First we consider IQP circuits, where the probability of receiving output $x$ is
\begin{equation}
    p_x = |\langle x|e^{i \theta \sum_{(i,j)\in E} X_i X_j} |0 \rangle|^2.
\end{equation}
We take $\theta$ small {and positive} and use the Taylor expansion of the exponential $e^{i\theta A}= I + i\theta A + R$, where $R$ satisfies $||R||=O(\theta^2)$ to obtain
\begin{equation}
    p_x = \left|\delta_{x0} + i \theta \langle x|\sum_{(i,j)\in E} X_i X_j|0 \rangle + \epsilon_x \right|^2 = \begin{cases}
    |1+\epsilon_{0^n}|^2 & \text{if } x=0^n\\
    |i \theta + \epsilon_x |^2 & \text{if } |x|=2, x_i=1, x_j=1, (i,j) \in E\\
    |\epsilon_x|^2 & \text{otherwise}
    \end{cases}
\end{equation}
for some value $\epsilon_x$ such that $|\epsilon_x| \le (\theta|E|)^2 \le (\theta n^2)^2$ (because $|E| \le \frac{n(n-1)}{2} \le n^2$). \\
By the reverse triangle inequality, $|i \theta + \epsilon_x | \ge \theta-|\epsilon_x|$. So the probability of seeing each edge in $E$ is lower-bounded by $(\theta - \theta^2 n^4)^2=\theta^2(1 - \theta n^4)^2$, and the probability of seeing an edge not in $E$ is upper-bounded by $\theta^4 n^8$. Taking $\theta \le 1/(2n^4)$, the former quantity is lower-bounded by $\theta^2/4$. After $k$ samples, the probability of seeing any non-existent edges is at most $k \theta^4 n^{10}$ by a union bound, while the probability of failing to see any edges in $E$ is at most
\begin{equation}
    |E| (1-\theta^2/4)^k \le n^2 e^{-\theta^2k/4}.\\
\end{equation}
For concreteness, take $\theta = 1/n^6$, $k=n^{13}$; then the probability of seeing any non-existent edges is at most $1/n$ while the probability of failing to see any edges in $E$ is exponentially small in $n$. $(\le n^2e^{-n/4})$.

Next we consider QAOA circuits, where we use a somewhat similar idea. Fix $p=1$. Then, up to an irrelevant global phase, the state produced by the QAOA circuit is
\begin{equation}
    e^{-i \beta \sum_j X_j}
    e^{-i \gamma/2 \sum_{(j,k)\in E} Z_j Z_k} \ket{+}^{\otimes n} = e^{-i \beta \sum_j X_j}
    \left(I- i \gamma/2 \sum_{(j,k) \in E} Z_j Z_k\right) \ket{+}^{\otimes n} + O(\gamma^2).
\end{equation}

We consider the correlation $\braket{Z_aZ_b}$. Since every computational-basis measurement yields values of $Z_aZ_b$ for all pairs $a,b$ simultaneously, we can estimate all these correlations from the same set of circuit samples. Writing $c=\cos(2\beta)$ and $s=\sin(2\beta)$ and using that $\bra{+}Z\ket{+}=\bra{+}Y\ket{+}=0$,
\begin{align*}
\braket{Z_aZ_b}&=
     \bra{+}^{\otimes n} \left(I+i \gamma/2 \sum_{(j,k) \in E} Z_j Z_k \right) e^{i \beta \sum_j X_j} Z_a Z_b e^{-i \beta \sum_j X_j}
    \left(I- i \gamma/2 \sum_{(j,k) \in E} Z_j Z_k\right) \ket{+}^{\otimes n} + O(\gamma^2)\\
    &= 
    \bra{+}^{\otimes n} \left(I+i \gamma/2 \sum_{(j,k) \in E} Z_j Z_k \right) 
     \left(c^2Z_aZ_b +cs\left(Z_aY_b+Y_aZ_b\right)+s^2Y_aY_b\right)
    \\[-2mm] &\qquad \left(I- i \gamma/2 \sum_{(j,k) \in E} Z_j Z_k\right) \ket{+}^{\otimes n} + O(\gamma^2)\\
    \\[-4mm]
    &= i \gamma/2 \bra{+}^{\otimes n} \left[\sum_{(j,k) \in E} Z_j Z_k,c^2Z_aZ_b +cs\left(Z_aY_b+Y_aZ_b\right)+s^2Y_aY_b\right]\ket{+}^{\otimes n} + O(\gamma^2)\\
     &=\begin{cases}
         i \gamma/2 \bra{+}^{\otimes n} cs\left(-2iX_a-2iX_b\right)\ket{+}^{\otimes n} + O(\gamma^2)& \text{if $(a,b)\in E$}\\
        O(\gamma^2) &  \text{if $(a,b)\notin E$}
     \end{cases}\\
    &= \begin{cases}
        \gamma\sin(4\beta) + O(\gamma^2)& \text{if $(a,b)\in E$}\\
        O(\gamma^2) &  \text{if $(a,b)\notin E$}
    \end{cases}
\end{align*}
Taking $\beta =\pi/8$, we therefore have
\[
\braket{Z_aZ_b} = \gamma A_{ab}+O(\gamma^2n^2).
\]
Then choosing $\gamma =  \Theta(n^{-4})$, the cases where $(a,b)\in E$ and $(a,b)\notin E$ can be separated by a constant fraction of $\gamma$, so it is sufficient to estimate $\braket{Z_a Z_b}$ up to $O(\gamma)$. Using a union bound over the edges, it is sufficient to take $\widetilde{O}(n^8)$ samples to learn all the edges in the graph.

%


\section{Distinguishing nonisomorphic graphs using QAOA}

{In Ref.~\cite{szegedy_2019_qaoa}, the `energy' of states produced by QAOA circuits for different graphs using identical angle sequences $\gamma,\beta \in [0,2\pi]^{p}$ is used to check graph isomorphism}. Each angle $\gamma_q,\beta_q \in \gamma,\beta$ is independent and is uniformly chosen from the interval $[0,2\pi]$.

\begin{definition}{\textbf{QAOA Energy}}\label{def: QAOA Energy}
  \\ If we denote the state in Definition~\ref{def: QAOA circuit} as $\ket{\gamma,\beta}$ and define $C = \sum_{(j,k)\in E} C_{(j,k)}$ then the energy is defined as
  \begin{align*}
    E(G,\gamma,\beta)\coloneqq \bra{\gamma,\beta}C\ket{\gamma,\beta}.
    \end{align*}
\end{definition}

Szegedy shows that the energy, for isomorphic graphs, is equal as long as the circuit is run with identical parameters.

To compare graphs, it is useful to define the average energy gap between graphs.

\begin{definition}{\textbf{Average Energy Gap}}\label{def: Energy gap}
    \\For two graphs $G_1 \text{ and } G_2,$ the average energy gap is defined as 
    \begin{align*}
        |\Delta E(G_1,G_2,p)|\coloneqq  \mathbb{E}|E(G_1,\gamma,\beta)-E(G_2,\gamma,\beta)|, \hspace{50pt} \gamma,\beta \text{ uniform in } [0,2\pi]^{p}.
    \end{align*}
    We will often denote this as $|\Delta E|$.
\end{definition}

The magnitude of the energy gap is important because it affects the number of shots required to resolve an energy difference experimentally.

Szegedy's work motivates the following conjecture.

\conjecture{\textbf{All connected graphs can be distinguished via their QAOA energy.}}\label{conj: Szegedy}
\\Let $G_1$ and $G_2$ be two non-isomorphic connected graphs. Then there exists some level $p>0$ and a choice of parameters $\gamma,\beta \in [0,2\pi]^p$ such that
\[
E(G_1,\gamma,\beta)\neq E(G_2,\gamma,\beta).
\]

Rather than restricting attention to the expectation value of the cost Hamiltonian, we investigate whether additional information in the QAOA output distribution can distinguish graphs at lower circuit depths. In particular, we consider the Hamming weight distribution introduced in Definition~\ref{def: Hamming weight distribution}, which is invariant under graph isomorphism by Lemma~\ref{lemma: HW dist invariance under perm}. This motivates the following conjecture.

\conjecture{\textbf{All connected graphs can be distinguished via the Hamming weight distribution produced by their corresponding QAOA circuit.}}\label{conj: QAOA Hamming weight distinguishability}
\\Let $G_1$ and $G_2$ be two non-isomorphic connected graphs, and let $P_G(k)$ denote the probability of measuring a bitstring with Hamming weight $k$ from the QAOA circuit \ref{def: QAOA circuit}, $k\in \{0,\dots,n\}$. Then there exists a depth $p>0$, a choice of parameters $\gamma,\beta \in [0,2\pi]^p$ and some $k\in\{0,\dots,n\}$ such that
\[
P_{G_1}(k)\neq P_{G_2}(k).
\]
Our numerical experiments have not produced a counterexample to either conjecture.

For either conjecture to yield a polynomial-time graph-isomorphism algorithm, it would additionally be necessary for the required circuit depth and number of measurements to scale polynomially with the graph size, and for suitable circuit parameters to be chosen efficiently.

Szegedy identified an important potential limitation of graph-isomorphism tests based solely on the QAOA energy. In particular, he found a sequence of non-isomorphic graph pairs for which the energy gap appears to decrease exponentially {for fixed $p$} as the graph size increases~\cite{szegedy_2019_qaoa}. In such cases, resolving the energy difference may require an impractically large number of measurements. Moreover, for some non-isomorphic graph pairs, the numerically computed gap can be exactly zero in floating-point arithmetic at the circuit depths considered.

We therefore investigate whether the Hamming weight distribution retains useful distinguishing information for such hard-to-distinguish graph pairs.

To compare Hamming weight distributions, we first need to introduce the total variation distance (TVD)
\begin{definition}{Hamming Weight Total Variation Distance (TVD)}\label{def: HW TVD}
\\ For Hamming weight distributions $P(k)$ and $Q(k)$ on $\{0,\dots,n\}$, 
    \[
    \operatorname{TVD}(P,Q) = \frac{1}{2}\sum_{k =0}^n|P(k)-Q(k)|
    \]
\end{definition}
It takes values between 0 and 1. A TVD of 0 means the distributions are identical and a TVD of 1 means they have disjoint support.

We {will see} that the computed TVD between Hamming weight distributions is frequently non-zero even when the computed value of $|\Delta E|$ is exactly zero. Thus, information contained in the QAOA distribution can distinguish some graph pairs at circuit depths for which the energy alone cannot.

However, this improvement is not universal. There exist graph families for which both $|\Delta E|$ and the TVD remain extremely small across all tested depths, implying that a large number of samples is still required. In such cases, the limitation appears to be intrinsic to the similarity of the underlying QAOA output distributions.

In the following subsections, we examine representative examples of such hard-to-distinguish graph pairs and compare the performance of the two metrics averaged across 500 random parameter sets $\gamma, \beta \in [0,2\pi]^p$. 

All classical simulations were performed using double-precision floating-point arithmetic. No additional numerical tolerance or threshold was applied when evaluating the energy gap, Hamming weight TVD, or ZZ eigenvalue gap. Accordingly, when a numerically computed quantity is described as zero, this means that its computed floating-point value was exactly zero; this should not be interpreted as an analytic proof that the corresponding exact quantity vanishes.

The following plots include error bars indicating the standard error of the mean (SEM), which measures how far the sample mean is likely to be from the population mean. It is defined as $SEM = \hat{\sigma}/\sqrt{N}$, where $N$ is the number of samples and $\hat{\sigma}$ is the sample standard deviation.

\subsection{Miyazaki graphs}
In 1997, Takunari Miyazaki constructed families of coloured, connected 3-regular graphs for which McKay's canonical-labelling algorithm, as implemented in \textsc{nauty}, exhibits exponential behaviour~\cite{miyazaki1997complexity}. In Szegedy's paper, he devotes particular attention to 3-regular graphs and studies a pair of hard-to-distinguish graphs that are inspired by Miyazaki's graphs, shown in Figure~\ref{fig: Miyazaki graphs}. 

\begin{figure}[tb!]
    \centering
    \includegraphics[width=0.9\textwidth]{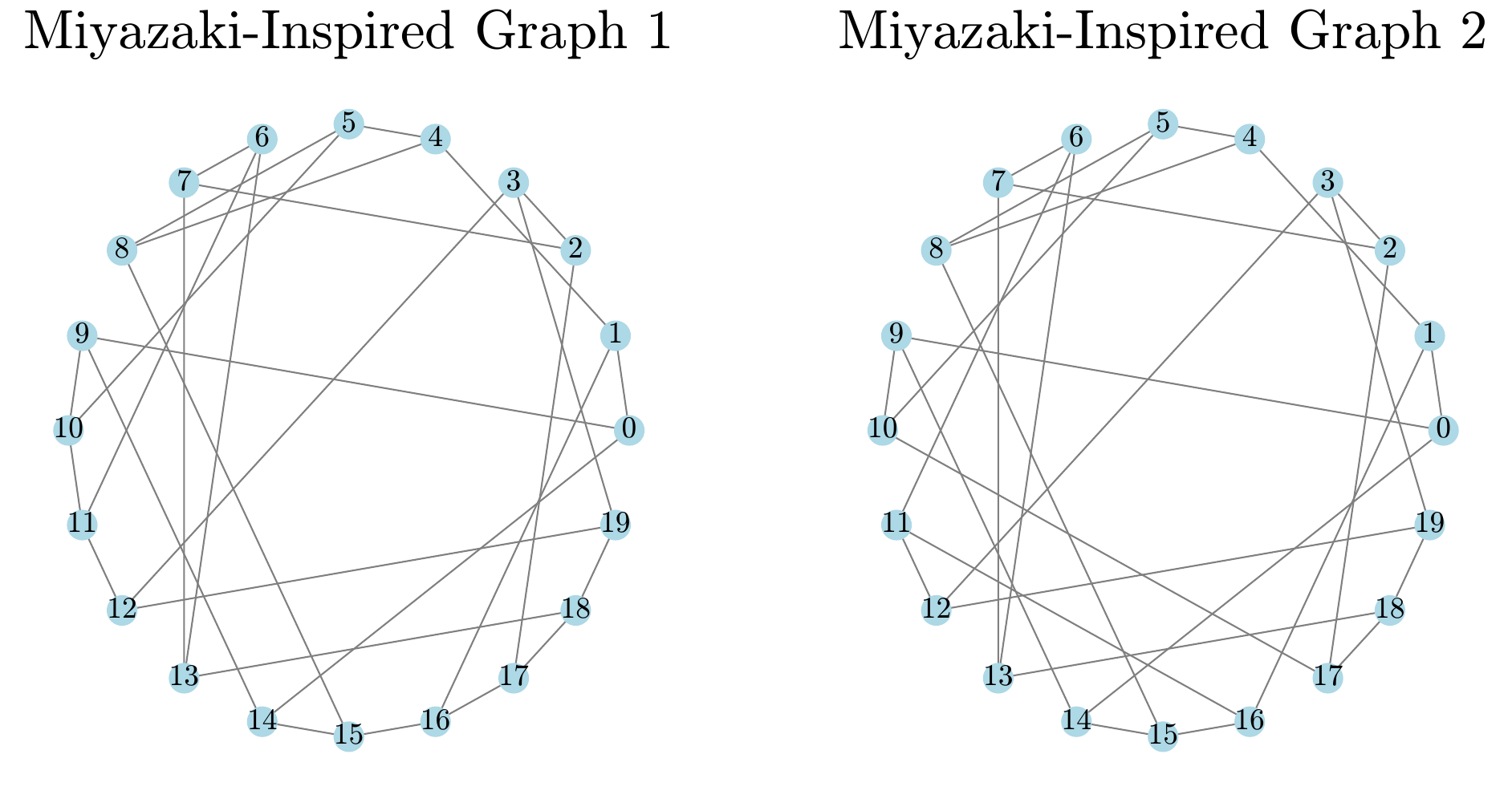}
    \caption{Two Miyazaki-inspired graphs.}
    \label{fig: Miyazaki graphs}
\end{figure}

Figure~\ref{fig: HW TVD vs Energy gap for Miyazaki graphs} shows that the numerically computed value of $|\Delta E|$ is zero for $p<3$ and becomes non-zero from depth $p=3$, consistent with Szegedy's results. In contrast, the computed Hamming weight TVD is non-zero for all tested depths. 
However, for this graph pair, both the TVD and $|\Delta E|$ remain {low} across all tested depths, {implying} that resolving these differences may take impractically many shots.

\begin{figure}[tb!]
    \centering
    \includegraphics[width=\textwidth]{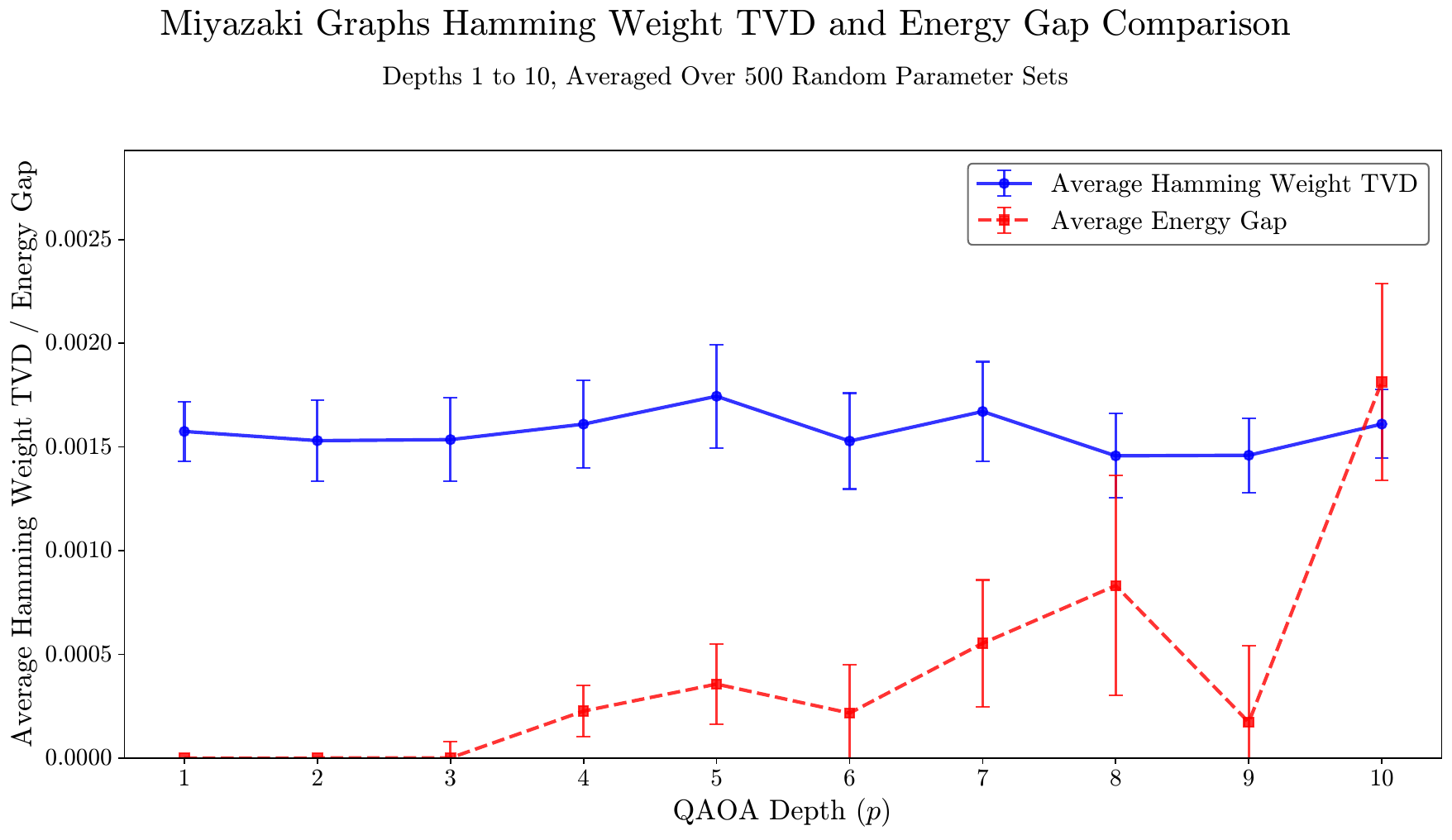}
    \caption{Comparison of Hamming weight TVD and energy gap for both Miyazaki-inspired graphs.}
    \label{fig: HW TVD vs Energy gap for Miyazaki graphs}
\end{figure}

\subsection{Circular and Möbius Ladder Graphs}

We next consider the circular and Möbius ladder graphs, shown in Figure~\ref{fig: CL and ML graph plots}. We denote these graphs by CL($m$) and ML($m$), respectively, where both graphs have $n=2m$ vertices. 

For $\operatorname{CL}(8)$ and $\operatorname{ML}(8)$, Figure~\ref{fig: HW TVD vs Energy gap for CL,ML(8)} shows that the numerically computed average QAOA energy gap is exactly zero for depths $p<4$ and becomes non-zero from $p=4$ onwards. In contrast, the computed Hamming weight TVD is non-zero at every tested depth, including $p=1$.

The same effect becomes more pronounced for the larger pair $\operatorname{CL}(10)$ and $\operatorname{ML}(10)$. As shown in Figure~\ref{fig: HW TVD vs Energy gap for CL,ML(10)}, the numerically computed energy gap is exactly zero for depths $p<5$, while the computed Hamming weight TVD is non-zero at every tested depth.

These examples reveal a structural limitation of the QAOA energy at low depth. As the graph size increases from $16$ to $20$ vertices, the minimum depth at which the computed energy gap becomes non-zero increases from $p=4$ to $p=5$, whereas the Hamming weight distribution distinguishes both pairs already at $p=1$. This demonstrates that distinguishing information can be present in the QAOA output distribution even when no difference in the expectation value is resolved numerically.

Despite this low-depth advantage, the Hamming weight distribution does not uniformly dominate the QAOA energy as a distinguishing statistic. In our broader numerical experiments, the QAOA energy already distinguishes most graph pairs effectively. The Hamming weight distribution is therefore most useful as a complementary statistic for cases in which the numerically computed energy gap is zero or becomes very small at low circuit depths.

\begin{figure}[tb!]
    \centering
    \includegraphics[width=0.9\textwidth]{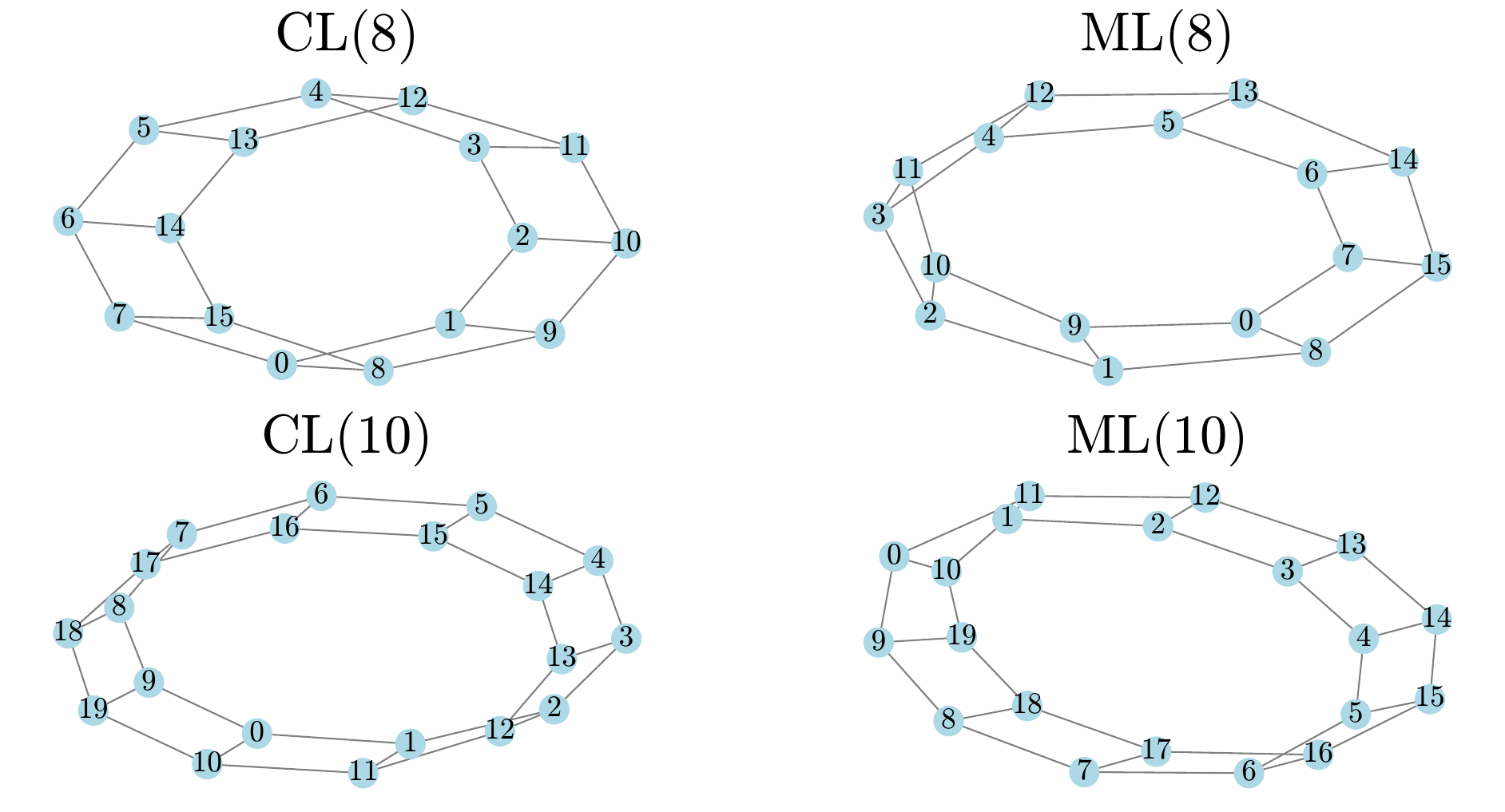}
    \caption{Circular and Möbius ladder graphs with 16 and 20 vertices.}
    \label{fig: CL and ML graph plots}
\end{figure}

\begin{figure}[tb!]
    \centering
    \includegraphics[width=\textwidth]{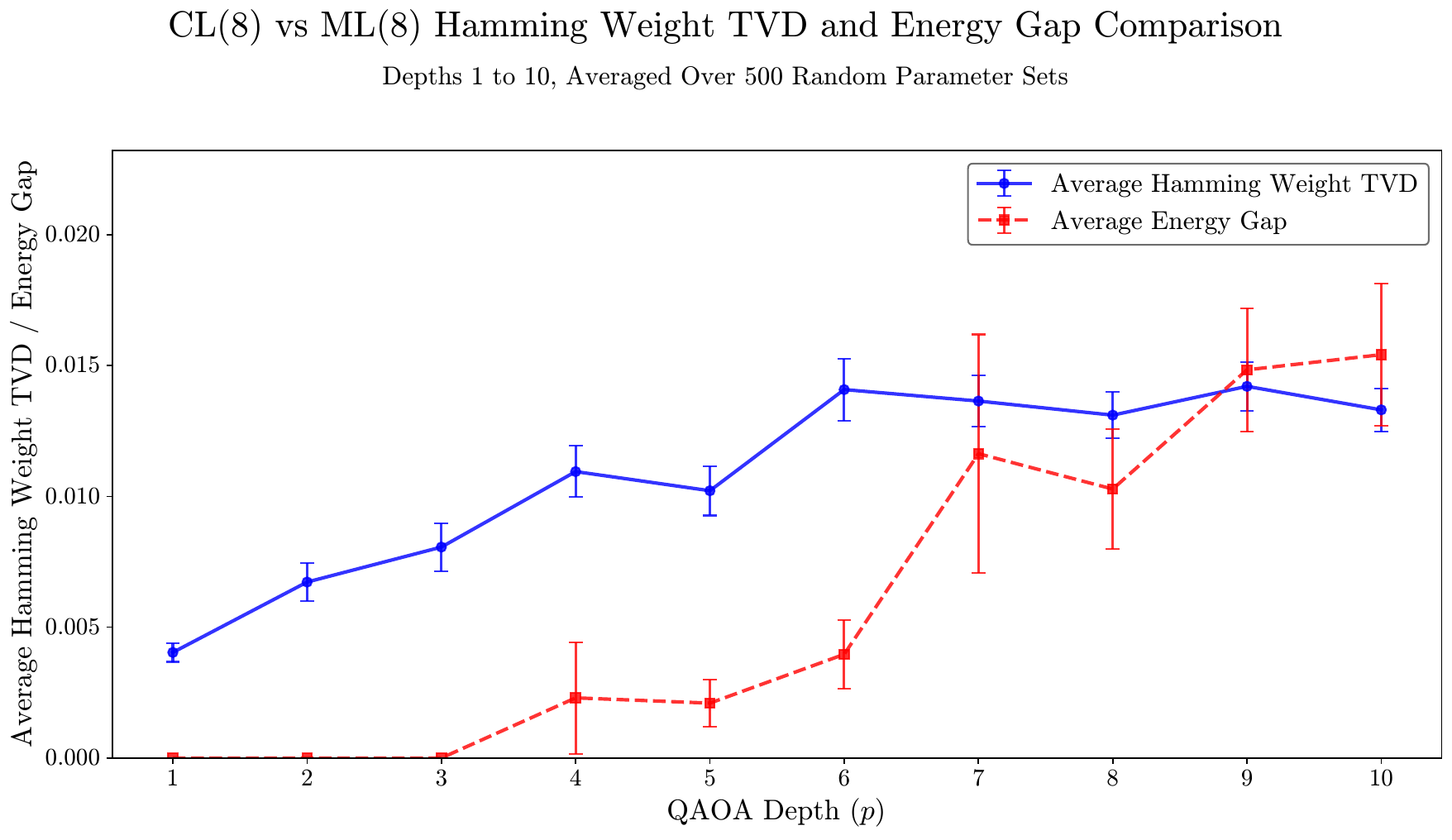}
    \caption{Comparison of Hamming weight TVD and energy gap for CL(8) and ML(8).}
    \label{fig: HW TVD vs Energy gap for CL,ML(8)}
\end{figure}

\begin{figure}[tb!]
    \centering
    \includegraphics[width=\textwidth]{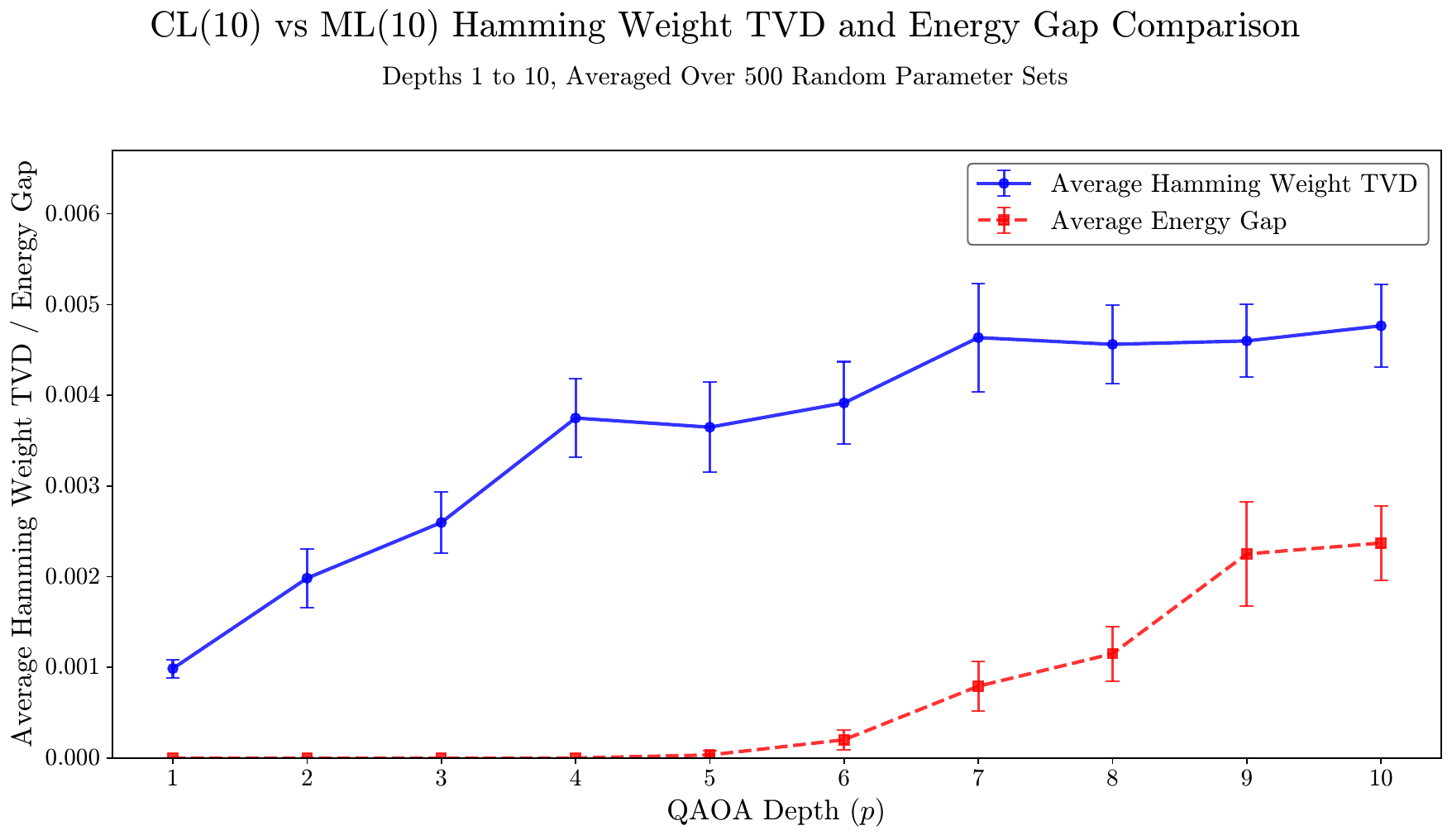}
    \caption{Comparison of Hamming weight TVD and energy gap for CL(10) and ML(10).}
    \label{fig: HW TVD vs Energy gap for CL,ML(10)}
\end{figure}


\section{Testing graph isomorphism using the matrix of correlators in QAOA}\label{sec: QAOA ZZ matrix}

In this section, we describe a general method for using QAOA to ``lift'' arbitrary  invariants under permutation similarity, based on the concept that even if the invariant does not distinguish the original graphs, in some cases it can be used to distinguish their correlation matrices.

\begin{definition}{\textbf{ZZ Correlation Matrix}}\label{def: ZZ correlation matrix}
\\For a graph $G$ whose corresponding probability distribution $P_G$ is produced by a level-$p$ QAOA circuit, its ZZ correlation matrix, denoted $M_G$, is defined by
\[
M_G\coloneqq \left({\braket{Z_iZ_j}_G}\right)_{i,j=1}^n
\]
where
\[
    \\{\braket{Z_iZ_j}_G} \coloneqq \sum_{x\in \{0,1\}^n} P_G(x)(-1)^{x_i\oplus x_j}.
\]
\end{definition}

We say that a function $f$ is  \emph{invariant under permutation similarity} if $f(M) = f(P M P^T)$ for any permutation matrix $P$.

\begin{lemma}{\textbf{Invariants under permutation similarity of the ZZ correlation matrix are preserved under permutations of the graph}}\label{thm: ZZ spectrum invariant under graph isomorphism}
    \\Let $G'$ be the graph obtained by applying a permutation $\pi\in S_n$ to the vertices of $G$, let $f$ be  invariant under permutation similarity, and suppose that the corresponding QAOA circuits are run with identical depth and parameters. Then 
    \[ 
    f(M_{G'}) = f(M_{G}).
    \]
\end{lemma}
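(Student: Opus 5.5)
The plan is to show that $M_{G'} = P M_G P^T$ for the permutation matrix $P$ associated with $\pi$. The conclusion then follows immediately from the assumed invariance of $f$ under permutation similarity. The whole argument therefore reduces to tracking how the correlators $\braket{Z_iZ_j}$ transform under relabelling of the vertices.

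\textbf{Step 1.} First I invoke Lemma~\ref{lemma: QAOA permuting nodes permutes bits}. Since the two circuits are run with identical depth and parameters, it gives $\ket{\text{QAOA}_{G'}} = U_\pi \ket{\text{QAOA}_G}$.

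\textbf{Step 2.} I then compute each correlator directly:
\[
\braket{Z_iZ_j}_{G'} = \bra{\text{QAOA}_G} U_\pi^\dagger Z_i Z_j U_\pi \ket{\text{QAOA}_G}.
\]
The key identity is $U_\pi^\dagger Z_i U_\pi = Z_{\pi^{-1}(i)}$, or $Z_{\pi(i)}$, depending on the convention for $U_\pi$. This holds because conjugating a single-qubit operator by a qubit permutation moves it to the relabelled qubit. It yields
\[
\braket{Z_iZ_j}_{G'} = \braket{Z_{\sigma(i)}Z_{\sigma(j)}}_G
\]
with $\sigma = \pi^{-1}$ (or $\pi$). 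Equivalently, I can argue at the level of distributions, which matches Definition~\ref{def: ZZ correlation matrix} more closely. Write $P_{G'}(y) = P_G(\pi(y))$, as in Lemma~\ref{lemma: HW dist invariance under perm}, and substitute $x=\pi(y)$ in $\sum_y P_{G'}(y)(-1)^{y_i\oplus y_j}$. The bits $y_i,y_j$ become bits of $x$ at the relabelled positions, so the sum becomes the corresponding entry of $M_G$.

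\textbf{Step 3.} Define $P$ by $P_{i,\sigma(i)}=1$. Then $(P M_G P^T)_{ij} = (M_G)_{\sigma(i)\sigma(j)}$, so $M_{G'} = P M_G P^T$. Applying $f$ gives $f(M_{G'}) = f(PM_GP^T) = f(M_G)$.

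The only real obstacle is bookkeeping: keeping the convention for $\pi$ acting on bitstrings consistent with that for $U_\pi$ and for $P$. I would handle it by noting that the final statement is insensitive to this choice. Whichever of $\pi$ or $\pi^{-1}$ appears, the corresponding matrix is still a permutation matrix, and $f$ is invariant under simultaneous row and column permutation by any permutation matrix.
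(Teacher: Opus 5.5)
Your proposal is correct and follows essentially the same route as the paper: invoke the QAOA permutation lemma to get $\ket{\mathrm{QAOA}_{G'}} = U_\pi\ket{\mathrm{QAOA}_G}$, conjugate $Z_iZ_j$ by $U_\pi$ to relabel qubits, conclude $M_{G'} = \Pi M_G \Pi^T$, and finish by the permutation-similarity invariance of $f$. Your remark that the $\pi$ versus $\pi^{-1}$ convention is immaterial is also right; with the paper's convention $U_\pi Z_i U_\pi^\dagger = Z_{\pi(i)}$, your $\sigma$ is $\pi^{-1}$.
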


\begin{proof} 
Let $U_\pi$ denote the unitary operator that permutes the qubits according to $\pi$. By Lemma~\ref{lemma: QAOA permuting nodes permutes bits}, 
\[ 
\ket{\mathrm{QAOA}_{G'}} = U_\pi\ket{\mathrm{QAOA}_{G}}, 
\] 
and 
\[ 
U_\pi Z_i U_\pi^\dagger = Z_{\pi(i)}. 
\] 

Therefore, for every $i,j\in\{1,\dots,n\}$, 

\begin{align*} \braket{Z_{\pi(i)}Z_{\pi(j)}}_{G'} &= \bra{\mathrm{QAOA}_{G'}} Z_{\pi(i)}Z_{\pi(j)} \ket{\mathrm{QAOA}_{G'}}
\\ &= \bra{\mathrm{QAOA}_{G}} U_\pi^\dagger Z_{\pi(i)}Z_{\pi(j)} U_\pi \ket{\mathrm{QAOA}_{G}}
\\ &= \bra{\mathrm{QAOA}_{G}} Z_iZ_j \ket{\mathrm{QAOA}_{G}}
\\ &= \braket{Z_iZ_j}_G. 
\end{align*} 

Hence, applying $\pi$ to the vertices of $G$ applies the same permutation to the rows and columns of $M_G$. 
Let $\Pi$ be the $n\times n$ permutation matrix corresponding to $\pi$. Then 
\[ 
M_{G'} = \Pi M_G\Pi^T. 
\]
Since $f$ is invariant under permutation similarity, $f(M_{G'}) = f(M_G)$.
\end{proof} 

\begin{corollary}\label{cor: distinct ZZ spectra imply nonisomorphism}
{If two graphs $G$ and $G'$ produce ZZ correlation matrices $M_G$, $M_{G'}$ for the same QAOA depth and parameters such that $f(M_{G'}) \neq f(M_{G})$ for some function $f$ which is invariant under permutation similarity,} then
\[
    G\not\cong G'.
\]
\end{corollary}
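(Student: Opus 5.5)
The plan is to prove the corollary as the contrapositive of Lemma~\ref{thm: ZZ spectrum invariant under graph isomorphism}. Suppose, for contradiction, that $G\cong G'$. By Definition~\ref{def: Graph isomorphism}, with both graphs on the vertex set $\{1,\dots,n\}$, there is a permutation $\pi\in S_n$ such that $G'$ is exactly the graph obtained by applying $\pi$ to the vertices of $G$. In other words, $(u,v)\in E$ if and only if $(\pi(u),\pi(v))\in E'$. This puts us precisely in the setting of Lemma~\ref{thm: ZZ spectrum invariant under graph isomorphism}.

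The hypotheses of the lemma are satisfied for the following reasons:
\begin{enumerate}[label=(\roman*)]
\item $G'$ arises from $G$ by a vertex permutation, as just shown.
\item $f$ is invariant under permutation similarity by assumption.
\item $M_G$ and $M_{G'}$ are computed from QAOA circuits of the same depth $p$ with the same parameters $\gamma,\beta$, again by assumption.
\end{enumerate}
The lemma, which rests on Lemma~\ref{lemma: QAOA permuting nodes permutes bits}, then gives $M_{G'}=\Pi M_G \Pi^T$ and hence $f(M_{G'})=f(M_G)$. This contradicts the hypothesis $f(M_{G'})\neq f(M_G)$, so $G\not\cong G'$.

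There is no real obstacle here. The only point needing care is the identification of an abstract isomorphism with ``applying a permutation $\pi$ to the vertices of $G$''. If the vertex sets differ, one would first relabel both graphs onto $\{1,\dots,n\}$. This is harmless, because relabelling one graph is itself a vertex permutation, and by the same lemma it leaves $f$ of its correlation matrix unchanged. It also matters that the depth and parameters are shared, since the lemma compares matrices only under identical circuit settings.
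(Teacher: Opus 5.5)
Your proposal is correct and matches the paper, which gives no separate proof and treats the corollary as the immediate contrapositive of Lemma~\ref{thm: ZZ spectrum invariant under graph isomorphism} (itself resting on Lemma~\ref{lemma: QAOA permuting nodes permutes bits}). Your remark about identifying both vertex sets with $\{1,\dots,n\}$ and requiring shared depth and parameters is a harmless clarification of points the paper leaves implicit.
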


{Thus we see that correlation matrices provide a richer, QAOA-derived matrix representation of the graph, which can be thought of as an upgraded adjacency matrix and may allow nonisomorphism to be detected even when applying the invariant to the adjacency matrix fails. We remark that these matrices are connected to the higher-order density matrices studied in~\cite{szegedy_2019_qaoa} (Mario Szegedy, personal communication).

We will focus on the specific case of the spectrum of a matrix, which is invariant under permutation similarity. The spectrum is a well-known graph invariant that can often be used to certify nonisomorphism; however, many cases of cospectral nonisomorphic graphs are known.
Here we consider some examples of cospectral, regular graphs and show that the spectrum of the ZZ correlation matrix can allow us to distinguish between graphs that otherwise have indistinguishable spectra. 
For these numerical experiments, spectra were treated as numerically indistinguishable only when the corresponding computed eigenvalue differences were zero in floating-point arithmetic.

First, we tested depth-1 QAOA circuits (Figure~\ref{fig: Proportion of cospectral graphs that are indistinguishable at depth 1}), where we find that the majority of cospectral graph pairs that we tried were distinguishable using correlation matrices, although the fraction that were indistinguishable increased with problem size. In contrast, at depth $p=2$, every cospectral pair tested from the same set as in Figure~\ref{fig: Proportion of cospectral graphs that are indistinguishable at depth 1} was distinguished by the ZZ correlation spectrum.

\begin{figure}[tb!]
    \centering
    \includegraphics[width=\textwidth]{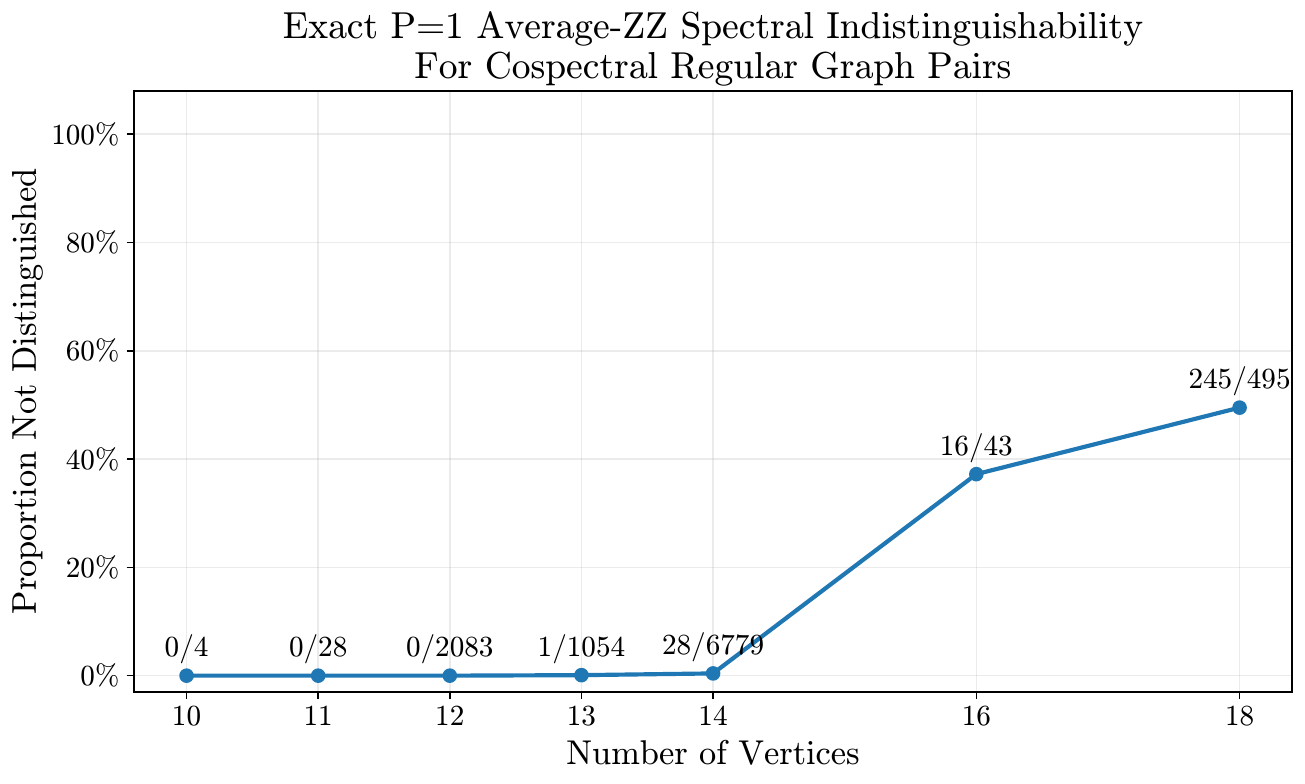}
    \caption{Proportion of cospectral graphs that are indistinguishable at depth 1.
    \\These graph pairs produced numerically indistinguishable ZZ correlation spectra for all 50 randomly chosen QAOA parameter sequences \(\gamma,\beta\in[0,2\pi]\).}
    \label{fig: Proportion of cospectral graphs that are indistinguishable at depth 1}
\end{figure}

In practice, it would not be possible to distinguish two nonisomorphic graphs from their correlation spectra if they were very close. We formalise this via a notion of average eigenvalue gap:
}

\begin{definition}{\textbf{Average ZZ eigenvalue gap}}\label{def: average ZZ eigenvalue gap}
\\ Let
\[
    \lambda_1\leq\lambda_2\leq\cdots\leq\lambda_n
\]
and
\[
    \mu_1\leq\mu_2\leq\cdots\leq\mu_n
\]
be the ordered eigenvalues of $M_G$ and $M_{G'}$, respectively. We define the average ZZ eigenvalue gap between $G$ and $G'$ to be
\[
    \Delta_{ZZ}(G,G') \coloneqq \frac{1}{n}\sum_{i=1}^n|\lambda_i-\mu_i|.
\]
\end{definition}

By Lemma~\ref{thm: ZZ spectrum invariant under graph isomorphism},
\[
    G \cong G' \quad \Longrightarrow \quad \Delta_{ZZ}(G,G')=0.
\]
Consequently, $\Delta_{ZZ}(G,G')>0$ is sufficient to certify that $G$ and $G'$ are non-isomorphic.

We next investigate whether the average ZZ eigenvalue gap can be reliably distinguished when the correlation matrix is estimated from a finite number of measurement shots. First, we prove theoretical bounds on how this quantity behaves. We will need the following theorem, which is based on the interpretation of the correlation matrix as a second moment matrix:

\begin{theorem}[Vershynin, Remark 5.6.5~\cite{Vershynin_2026}]
\label{thm: Vershynin covariance}
    Let $X$ be a random vector in $\mathbb{R}^n$, $n \ge 2$, and assume that for some $K\ge 1$,
    \[ \|X\|_2 \le K(\mathbb{E} \|X\|_2^2)^{1/2} \]
    almost surely. Write $\Sigma = \mathbb{E} [XX^T]$ and
    \[ \Sigma_m = \frac{1}{m} \sum_{i=1}^m X_i X_i^T \]
    where $X_i$ are independent random samples. Define the effective rank
    \[
    r(\Sigma)\coloneqq \frac{\operatorname{tr}(\Sigma)}{\|\Sigma\|}
    \]
    
    Then, for every positive integer $m$ and every $u\geq0$,
    \[ \| \Sigma_m - \Sigma \| \le C \left( \sqrt{\frac{K^2 r(\Sigma) (\log n+u)}{m}}+ \frac{K^2 r(\Sigma) (\log n+u)}{m} \right)\|\Sigma\| \]
    with probability at least $1-2e^{-u}$.
\end{theorem}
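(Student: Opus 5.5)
This statement is cited from Vershynin's book, so the proof would follow his route: a matrix Bernstein inequality applied to the centred summands $Y_i = X_iX_i^T - \Sigma$. I would write $\Sigma_m - \Sigma = \frac{1}{m}\sum_{i=1}^m Y_i$. The $Y_i$ are independent, mean-zero, symmetric $n\times n$ matrices, so the whole argument reduces to two estimates: an almost-sure bound on $\|Y_i\|$, and a bound on the matrix variance $\|\sum_i \mathbb{E} Y_i^2\|$. Both should come out in terms of $K^2 r(\Sigma)\|\Sigma\|$.

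\textbf{Step 1: uniform bound.} Note that $\mathbb{E}\|X\|_2^2 = \operatorname{tr}(\Sigma) = r(\Sigma)\|\Sigma\|$. The hypothesis then gives $\|XX^T\| = \|X\|_2^2 \le K^2 r(\Sigma)\|\Sigma\|$ almost surely. Since $K\ge1$ and $r(\Sigma)\ge1$, we also have $\|\Sigma\| \le K^2 r(\Sigma)\|\Sigma\|$. Hence $\|Y_i\| \le 2K^2 r(\Sigma)\|\Sigma\| =: L$.

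\textbf{Step 2: variance bound.} I would compute
\[
\mathbb{E} Y_i^2 = \mathbb{E}\|X\|_2^2 XX^T - \Sigma^2 \preceq K^2 r(\Sigma)\|\Sigma\|\,\Sigma .
\]
It follows that $\|\sum_i \mathbb{E}Y_i^2\| \le m K^2 r(\Sigma)\|\Sigma\|^2 =: \sigma^2$.

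\textbf{Step 3: apply matrix Bernstein.} The inequality gives
\[
\Pr\Bigl(\Bigl\|\sum_i Y_i\Bigr\| \ge t\Bigr) \le 2n \exp\!\Bigl(-\frac{t^2/2}{\sigma^2 + Lt/3}\Bigr).
\]
I would set the right-hand side equal to $2e^{-u}$; this requires the exponent to be at least $\log n + u$. Solving the resulting quadratic in $t$ yields
\[
t \lesssim \sigma\sqrt{\log n + u} + L(\log n + u).
\]
Dividing by $m$ gives
\[
\|\Sigma_m - \Sigma\| \le C\Bigl(\sqrt{K^2 r(\Sigma)(\log n+u)/m} + K^2 r(\Sigma)(\log n+u)/m\Bigr)\|\Sigma\|,
\]
which is the stated bound.

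\textbf{Main obstacle.} The delicate part is Step 2. The naive bound $\|\mathbb{E}Y_i^2\| \le L^2$ would lose a factor of $r(\Sigma)$ under the square root. The key is the PSD ordering $\mathbb{E}\|X\|^2XX^T \preceq (\sup\|X\|^2)\,\Sigma$, together with discarding $-\Sigma^2 \preceq 0$. The remaining steps are routine bookkeeping of constants, and the $\log n$ comes from the dimensional factor $2n$ in matrix Bernstein.
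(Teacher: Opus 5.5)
Your proposal is correct and is essentially the argument in Vershynin's book that the paper cites (the paper gives no proof of its own). That argument applies matrix Bernstein to the centred summands $X_iX_i^T-\Sigma$, using $\mathbb{E}\|X\|_2^2=\operatorname{tr}(\Sigma)$ to control both the uniform bound and the variance, the latter via $\mathbb{E}(XX^T)^2\preceq K^2\operatorname{tr}(\Sigma)\,\Sigma$; the tail form of Bernstein with exponent $\log n+u$ then gives the stated high-probability, effective-rank version.
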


For empirical ZZ correlation matrices $\widetilde{M}_G$ and $\widetilde{M}_{G'}$, define the empirical average ZZ eigenvalue gap by 
\[
\widetilde{\Delta}_{ZZ}(G,G')\coloneqq \frac{1}{n}\sum_{i=1}^n \left|\lambda_i(\widetilde{M}_G) - \lambda_i(\widetilde{M}_{G'})\right|,
\]
where the eigenvalues are ordered so that $\lambda_1\leq\lambda_2\leq\cdots\leq\lambda_n$.

\begin{theorem}\label{thrm: ZZ shot scaling for non-iso graphs}
Let $G$ and $G'$ be graphs on $n\geq 2$ vertices and let ${M}_G$ and ${M}_{G'}$ be their exact ZZ correlation matrices obtained using the same QAOA depth and parameters. Assume that $\Delta_{ZZ}(G,G') = \delta>0$.
Let $\widetilde{M}_G$ and $\widetilde{M}_{G'}$ be obtained by averaging the single-shot correlator matrices from $m$ independent measurements of each circuit. Then 
\[
m = O\left(\frac{n^2 \log n}{\delta^2}\right)
\] 
is sufficient to achieve 
\[
\Pr\left[\left|\widetilde{\Delta}_{ZZ}(G,G')-\delta\right| \ge \delta/3\right] \le 0.01
\]
\end{theorem}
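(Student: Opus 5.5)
The plan is to reduce the statement to an operator-norm concentration bound for each empirical correlation matrix, and then transfer that bound to the ordered spectra using Weyl's inequality.

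\emph{Setting up the second-moment structure.} For a single shot with outcome $x\in\{0,1\}^n$, let $X\in\mathbb{R}^n$ have entries $X_i=(-1)^{x_i}$. Then the single-shot correlator matrix is $XX^T$, so $M_G=\mathbb{E}[XX^T]=\Sigma$ and $\widetilde{M}_G=\Sigma_m$ in the notation of Theorem~\ref{thm: Vershynin covariance}. Since $\|X\|_2^2=n$ deterministically, the hypothesis of that theorem holds with $K=1$.

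\emph{Bounding the parameters.} The diagonal of $M_G$ is all ones, so $\operatorname{tr}(\Sigma)=n$. Moreover $\|\Sigma\|\le \operatorname{tr}(\Sigma)=n$, which gives $r(\Sigma)\,\|\Sigma\| = n$. The effective rank and the norm combine exactly: the right-hand side of Theorem~\ref{thm: Vershynin covariance} becomes
\[
C\left(\sqrt{\frac{n\|\Sigma\|(\log n+u)}{m}}+\frac{n(\log n+u)}{m}\right)\le C\left(n\sqrt{\frac{\log n+u}{m}}+\frac{n(\log n+u)}{m}\right),
\]
where we used $\|\Sigma\|\le n$ in the first term.

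\emph{Choosing $u$ and $m$.} Take $u=\log 400$, so that each matrix fails with probability at most $2e^{-u}=1/200$, and a union bound over the two circuits gives total failure probability at most $0.01$. We want $\|\widetilde M_G-M_G\|\le\delta/6$, and likewise for $G'$. Choosing $m=C'n^2\log n/\delta^2$ with $C'$ large (and using $\delta\le 2$, hence $\delta/n\le 2$) makes both terms at most $\delta/12$. The second term is handled because $n\log n/m=\delta^2/(C'n)$, which is small relative to $\delta$. This yields $m=O(n^2\log n/\delta^2)$.

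\emph{Spectral transfer.} Weyl's inequality gives $|\lambda_i(\widetilde M_G)-\lambda_i(M_G)|\le\|\widetilde M_G-M_G\|$ for every $i$ in the sorted order, and similarly for $G'$. Applying the triangle inequality termwise inside the average,
\[
\left|\widetilde\Delta_{ZZ}-\delta\right|\le\frac1n\sum_i\Bigl(|\lambda_i(\widetilde M_G)-\lambda_i(M_G)|+|\lambda_i(\widetilde M_{G'})-\lambda_i(M_{G'})|\Bigr)\le \delta/6+\delta/6=\delta/3 .
\]
On the success event this is the required bound, up to the boundary case of equality. That case is handled by taking a slightly smaller target, say $\delta/7$ per matrix.

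\emph{Main obstacle.} The delicate step is the parameter bookkeeping in Theorem~\ref{thm: Vershynin covariance}. A crude bound $r(\Sigma)\le n$ used together with $\|\Sigma\|\le n$ would give $n^2$ inside the square root and a worse scaling. The key is that $r(\Sigma)\|\Sigma\|=\operatorname{tr}\Sigma=n$ exactly, so only one factor of $\|\Sigma\|\le n$ remains, and it sits under the square root. One also has to confirm that the linear term does not dominate in the regime of interest, i.e.\ for small $\delta$; this is where $\delta\le 2$ is needed.
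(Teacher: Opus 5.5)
Your proposal is correct and follows essentially the same route as the paper: Weyl's inequality with the triangle and reverse-triangle inequalities reduces everything to $\|E_G\|+\|E_{G'}\|$, which is then controlled by Theorem~\ref{thm: Vershynin covariance} with $K=1$, $r(M_G)\|M_G\|=\operatorname{tr}M_G=n$, $\|M_G\|\le n$, $u=\ln 400$ and a union bound. The differences are minor. You bound $\|M_G\|$ by the trace (via positive semidefiniteness) rather than the Frobenius norm. You control the linear term via $\delta\le 2$ instead of imposing $m\ge\log n+u$; you should justify this bound, which holds because both spectra are nonnegative and sum to $n$. You also handle the $\le$ versus $<$ boundary case, which the paper glosses over.
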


\begin{proof}
    Write $\widetilde{M}_G = M_G + E_G$ and $\widetilde{M}_{G'} = M_{G'} + E_{G'}$ where $E_G$ and $E_{G'}$ are the statistical error terms.
    Then, by Weyl's inequality,
    \[ 
    \left|\lambda_i(\widetilde{M}_G) - \lambda_i(M_G)\right| \le \|E_G\|, 
    \]
    and similarly for $G'$. Thus, by the triangle and reverse triangle inequalities,
    \begin{align*}
        \left|\widetilde{\Delta}_{ZZ}(G,G') -\delta\right|
        &\leq \frac{1}{n}\sum_{i=1}^n \left||\lambda_i(\widetilde{M}_G) - \lambda_i(\widetilde{M}_{G'})|-\left|\lambda_i({M}_G) - \lambda_i({M}_{G'})\right|\right|
        \\&\le\frac{1}{n}\sum_{i=1}^n \left|\lambda_i(\widetilde{M}_G) - \lambda_i(\widetilde{M}_{G'})-\lambda_i({M}_G) + \lambda_i({M}_{G'})\right|
        \\&\le\frac{1}{n}\sum_{i=1}^n \left(\left|\lambda_i(\widetilde{M}_G) - \lambda_i({M}_G)\right|+\left|\lambda_i(\widetilde{M}_{G'}) - \lambda_i({M}_{G'})\right|\right)
        \\&\le \|E_G\| + \|E_{G'}\| 
    \end{align*}
To bound $\|E_G\|$, we can appeal to Theorem~\ref{thm: Vershynin covariance}, using that each sample of correlations is generated by sampling a vector $X_i \in \{\pm1\}^n$ from some distribution, and computing $X_i X_i^T$. Thus we can take $K=1$ and use the fact that every diagonal entry of $M_G$ is one to obtain 
\[
r(M_G)=\frac{n}{\|M_G\|}.
\] 
Furthermore, 
\begin{align*}
    \|E_G\|&\leq C\left(\sqrt{\frac{r(M_G)(\log n+u)}{m}}+\frac{r(M_G)(\log n+u)}{m}\right)\|M_G\|
    \\&=C\left(\sqrt{\frac{n\|M_G\|(\log n+u)}{m}}+\frac{n(\log n+u)}{m}\right)
\end{align*}
Since every entry of $M_G$ is a correlation of two $\{\pm1\}$-valued variables, $|(M_G)_{i,j}|\leq 1$. Consequently,  
\[
\|M_G\|_F^2= \sum_{i,j}|(M_G)_{i,j}|^2\leq n^2,
\] 
so 
\[
\|M_G\|\leq \|M_G\|_F\leq n.
\]
Thus
\[
\|E_G\|\leq C\left(\sqrt{\frac{n^2(\log n+u)}{m}}+\frac{n(\log n+u)}{m}\right),
\]
except with probability at most $2e^{-u}$. The same bound holds for $E_{G'}$.
    
We choose $m\geq \log n+u$ so that the first term dominates. Setting $u=\ln(400)$, each matrix estimate fails with probability at most $0.005$. Hence, by a union bound, the bounds for $E_G$ and $E_{G'}$ hold simultaneously with probability at least $0.99$. Therefore, for some constant $C'$ and sufficiently large $n$,
\[
\left|\widetilde{\Delta}_{ZZ}(G,G')-\delta\right|
\leq
C'\sqrt{\frac{n^2\log n}{m}}
\]
with probability at least $0.99$.
Therefore, to achieve
\[
\left|\widetilde{\Delta}_{ZZ}(G,G')-\delta\right|
\leq \frac{\delta}{3},
\]
it is sufficient to take
\[
m=O\left(\frac{n^2\log n}{\delta^2}\right).
\]
\end{proof}


\begin{corollary}\label{cor: finite shot rejection of isomorphism}
Let $G$ and $G'$ be isomorphic graphs on $n\geq2$ vertices, and suppose that their QAOA circuits are run with the same depth and parameters. Let $\widetilde{M}_G$ and $\widetilde{M}_{G'}$ be obtained from $m$ independent measurements of each circuit. Then, for any threshold $\epsilon>0$,
\[
m=O\left(\frac{n^2\log n}{\epsilon^2}\right)
\]
is sufficient to ensure
\[
\Pr\left[\widetilde{\Delta}_{ZZ}(G,G')\geq \epsilon\right]\leq0.01.
\]
Hence, after this number of measurements, observing an empirical average ZZ eigenvalue gap of at least $\epsilon$ rejects the hypothesis that $G\cong G'$ with false-positive probability at most $0.01$.
\end{corollary}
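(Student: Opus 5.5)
The plan is to reuse the concentration argument from the proof of Theorem~\ref{thrm: ZZ shot scaling for non-iso graphs}, now with exact gap $\delta=0$. The first step is to show that the exact gap vanishes. Since $G\cong G'$, Lemma~\ref{thm: ZZ spectrum invariant under graph isomorphism} applies with $f$ the ordered spectrum, which is invariant under permutation similarity. It gives $M_{G'}=\Pi M_G\Pi^T$, so the ordered eigenvalues agree and $\Delta_{ZZ}(G,G')=0$.

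Next, write $\widetilde M_G=M_G+E_G$ and $\widetilde M_{G'}=M_{G'}+E_{G'}$. Weyl's inequality gives $|\lambda_i(\widetilde M_G)-\lambda_i(M_G)|\le\|E_G\|$, and the same holds for $G'$. Because $\lambda_i(M_G)=\lambda_i(M_{G'})$, the triangle inequality yields
\[
\widetilde\Delta_{ZZ}(G,G')\le\frac1n\sum_{i=1}^n\left(\left|\lambda_i(\widetilde M_G)-\lambda_i(M_G)\right|+\left|\lambda_i(\widetilde M_{G'})-\lambda_i(M_{G'})\right|\right)\le\|E_G\|+\|E_{G'}\|.
\]
This is the chain of inequalities from the previous proof with $\delta=0$.

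The operator-norm errors are controlled by Theorem~\ref{thm: Vershynin covariance}, used exactly as before. Each single-shot sample is $X_i\in\{\pm1\}^n$, so $\|X_i\|_2^2=n=\mathbb E\|X\|_2^2$ and we may take $K=1$. The diagonal of $M_G$ is all ones, so $r(M_G)=n/\|M_G\|$, and $\|M_G\|\le\|M_G\|_F\le n$. Together these give
\[
\|E_G\|\le C\left(\sqrt{\frac{n^2(\log n+u)}{m}}+\frac{n(\log n+u)}{m}\right)
\]
with probability at least $1-2e^{-u}$, and the same bound holds for $E_{G'}$. Take $u=\ln 400$ and apply a union bound; both bounds then hold together with probability at least $0.99$.

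Finally, assume $m\ge\log n+u$, so that the square-root term dominates. Then $\widetilde\Delta_{ZZ}\le C'\sqrt{n^2\log n/m}$, and choosing $m\ge (2C')^2 n^2\log n/\epsilon^2$ makes this at most $\epsilon/2$, which is below $\epsilon$. Hence $\Pr[\widetilde\Delta_{ZZ}\ge\epsilon]\le0.01$. The hypothesis-testing conclusion follows: under $G\cong G'$, an empirical gap of at least $\epsilon$ occurs with probability at most $0.01$.

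There is no real obstacle here, only bookkeeping. One point is that the requirement $m\ge\log n+u$ must be absorbed into the $O(\cdot)$. This is automatic whenever $\epsilon\le n$; for larger $\epsilon$ the bound is trivial, since $\widetilde\Delta_{ZZ}\le 2n$ always. The other point is that the constant $C'$ must be chosen uniformly in $n$.
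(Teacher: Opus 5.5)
Your proposal is correct and follows the paper's proof. You show $\Delta_{ZZ}(G,G')=0$ via Lemma~\ref{thm: ZZ spectrum invariant under graph isomorphism}, rerun the Weyl-plus-Vershynin concentration argument of Theorem~\ref{thrm: ZZ shot scaling for non-iso graphs} with $\delta=0$, and choose the constant in $m$ large enough; you just spell out in full what the paper compresses into ``the same concentration argument.'' One small bookkeeping point: the sharper bound $\widetilde\Delta_{ZZ}\le n$ holds, since each empirical matrix is PSD with trace $n$ and so has eigenvalues in $[0,n]$. With it your case split at $\epsilon\le n$ is exact, whereas with $2n$ the range $n<\epsilon\le 2n$ still needs the constant-absorption argument.
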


\begin{proof}
If $G\cong G'$, then by Lemma~\ref{thm: ZZ spectrum invariant under graph isomorphism},
\[
\Delta_{ZZ}(G,G')=0.
\]
The same concentration argument used in the preceding theorem gives, with probability at least $0.99$,
\[
\widetilde{\Delta}_{ZZ}(G,G')
\leq
C'\sqrt{\frac{n^2\log n}{m}}
\]
for some constant $C'$. Choosing the implicit constant in
\[
m=O\left(\frac{n^2\log n}{\epsilon^2}\right)
\]
sufficiently large makes the right-hand side strictly smaller than $\epsilon$. Therefore
\[
\Pr\left[\widetilde{\Delta}_{ZZ}(G,G')\geq \epsilon\right]\leq0.01.
\]
\end{proof}

This corollary complements the preceding theorem. Theorem~\ref{thrm: ZZ shot scaling for non-iso graphs} bounds the error in estimating a non-zero ZZ eigenvalue gap, while Corollary~\ref{cor: finite shot rejection of isomorphism} bounds the probability that finite sampling produces an apparent gap of size at least $\epsilon$ when the graphs are in fact isomorphic.

In practice, these theoretical bounds may be somewhat pessimistic. We illustrate this in Figures~\ref{fig: Shot convergence of the average ZZ eigenvalue gap for CL(8) vs ML(8)} and~\ref{fig: Shot convergence of the average ZZ eigenvalue gap for G_{3770} vs G_{3774}}, which show the convergence of the sampled average ZZ eigenvalue gap to its classically simulated value as the number of shots increases. The dotted line denotes the gap obtained from the full probability distribution, while the solid line denotes the mean gap estimated from finite-shot samples.

\begin{figure}[tb!]
    \centering
    \includegraphics[width=\textwidth]{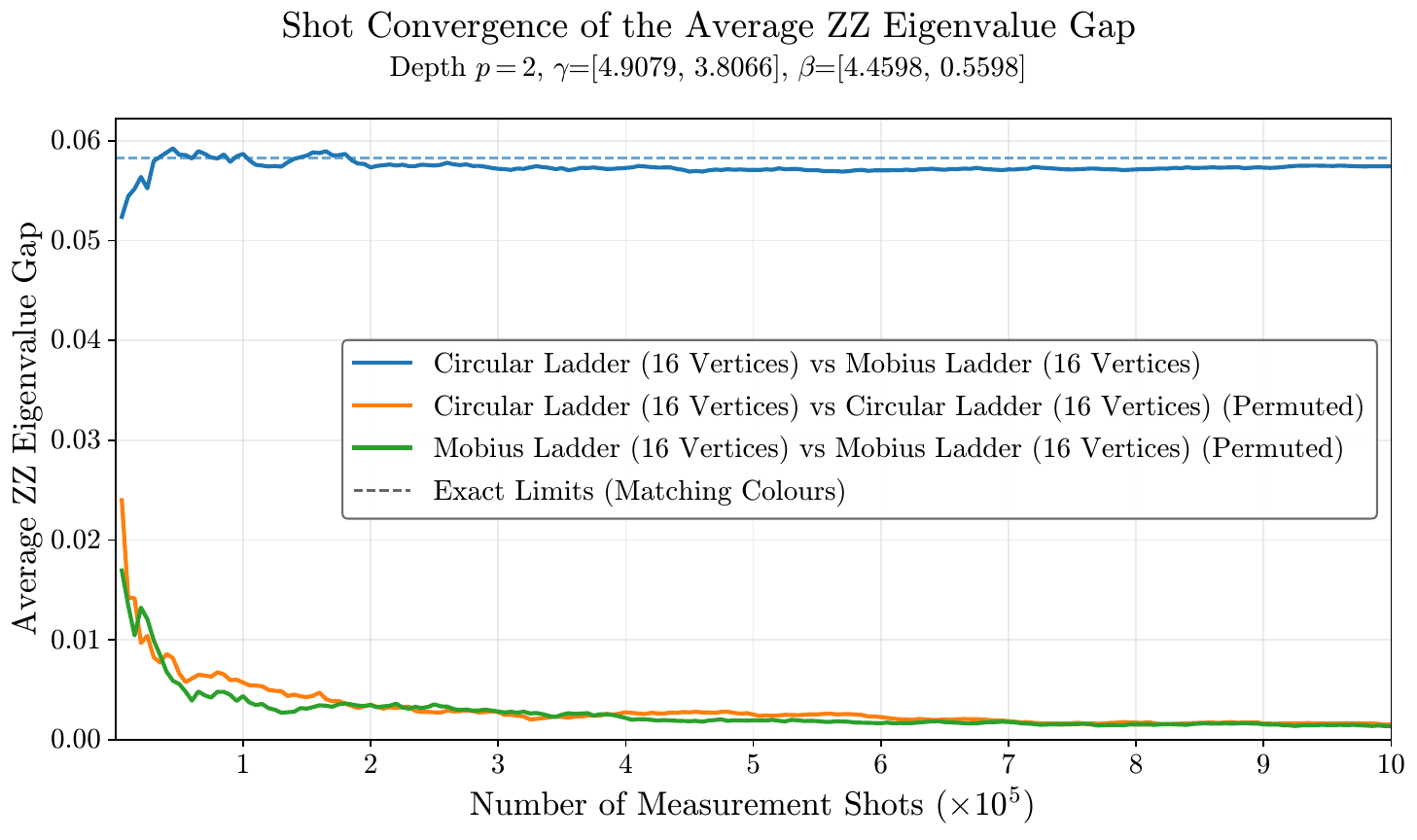}
    \caption{Shot convergence of $\Delta_{ZZ}$ for CL(8) vs ML(8).}
    \label{fig: Shot convergence of the average ZZ eigenvalue gap for CL(8) vs ML(8)}
\end{figure}

\begin{figure}[tb!]
    \centering
    \includegraphics[width=\textwidth]{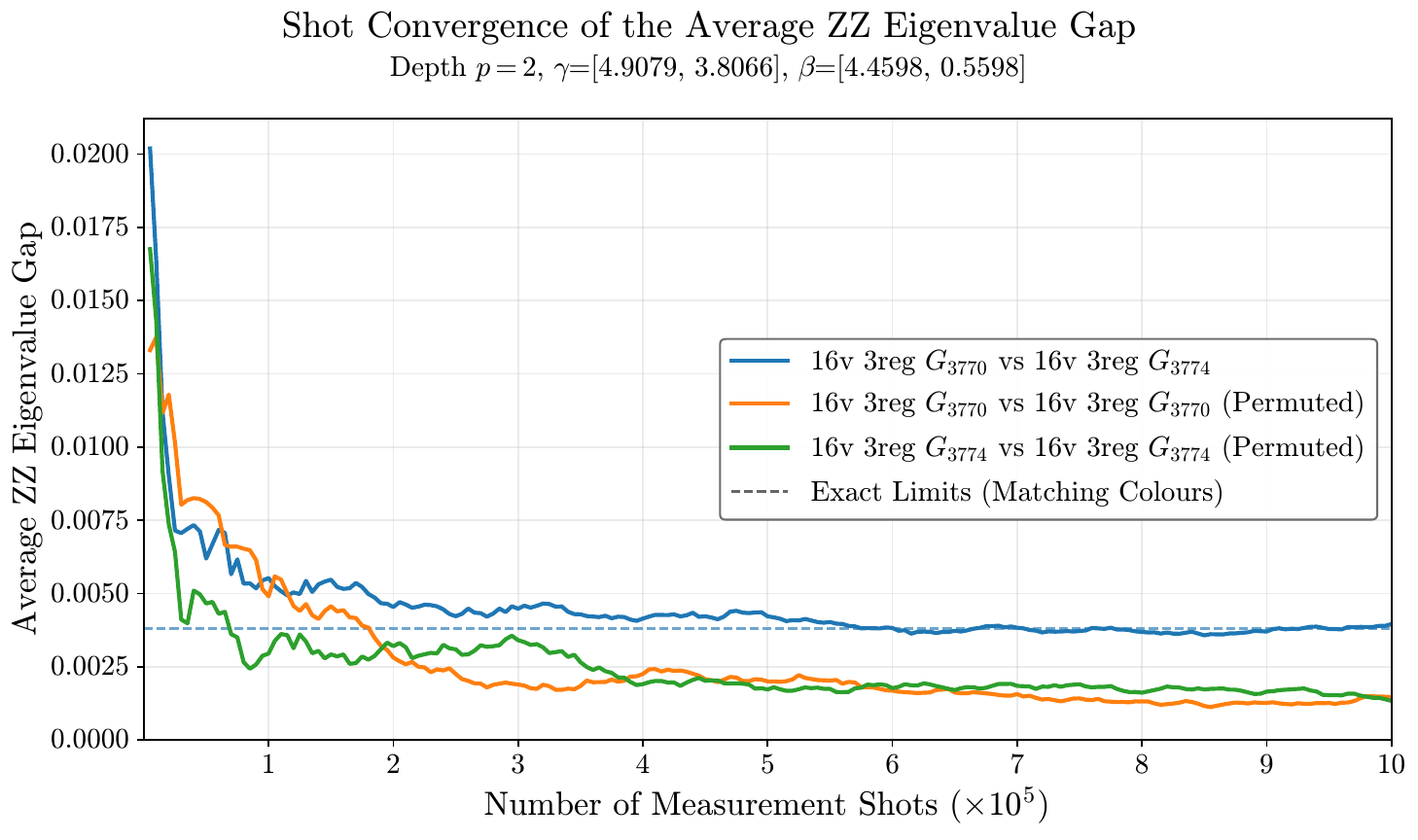}
    \caption{Shot convergence of $\Delta_{ZZ}$ for two $16$ vertex cospectral three-regular graphs.}
    \label{fig: Shot convergence of the average ZZ eigenvalue gap for G_{3770} vs G_{3774}}
\end{figure}

To test how well cospectral regular graphs perform in general, we calculated $\Delta_{ZZ}$ at 200 random parameter sequences for 10,486 cospectral graph pairs and averaged the results. Figure~\ref{fig: The scaling of Delta_ZZ for cospectral regular graphs} shows these results, with error bars indicating the Standard Error of the Mean (SEM).
\begin{figure}[tb!]
    \centering
    \includegraphics[width=\textwidth]{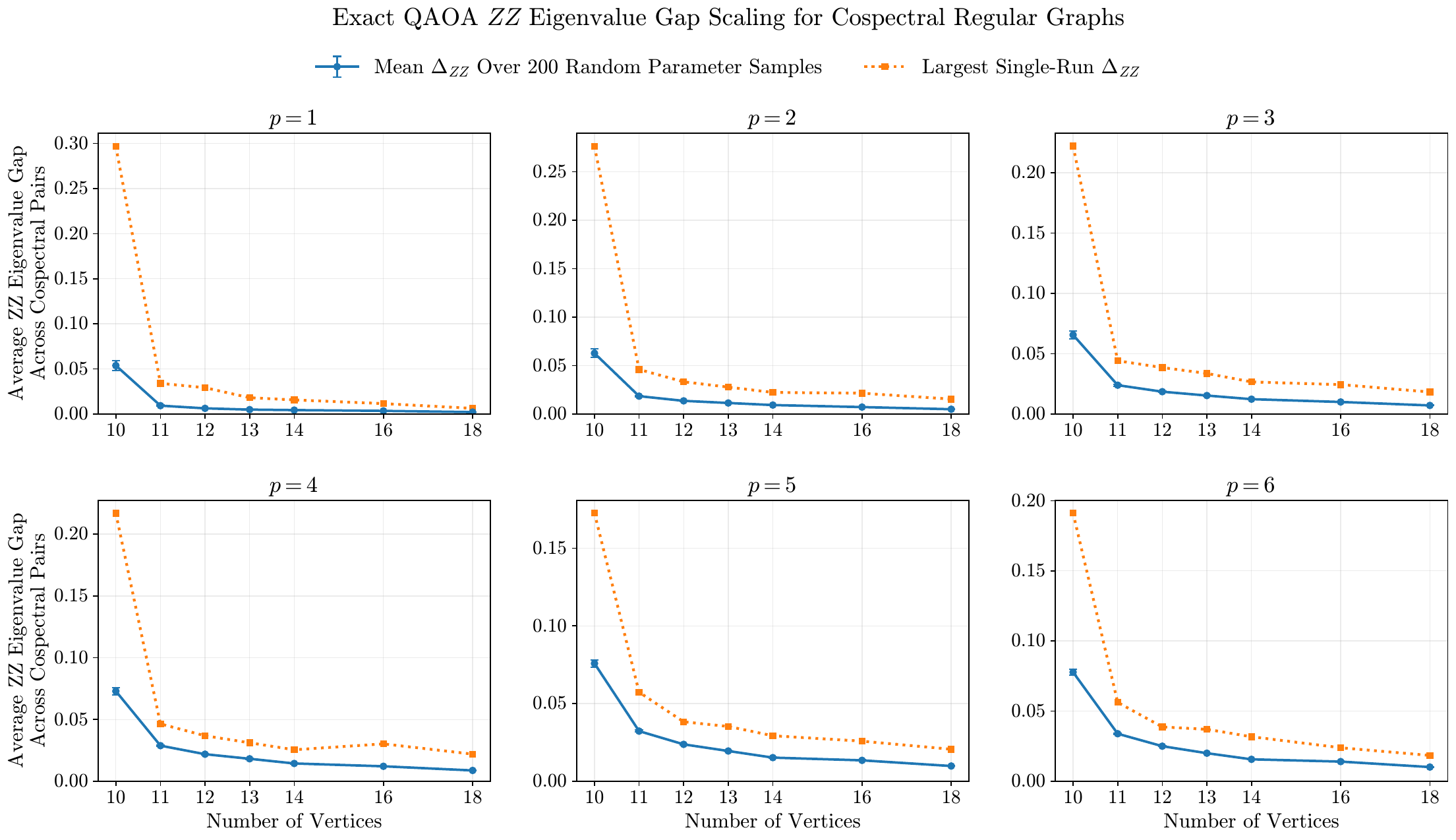}
    \caption{The scaling of $\Delta_{ZZ}$ for 10,486 cospectral regular graph pairs.}
    \label{fig: The scaling of Delta_ZZ for cospectral regular graphs}
\end{figure}

These numerical results motivate the following conjecture.

\begin{conjecture}{\textbf{ZZ-Spectrum Distinguishability}}\label{conj: ZZ-Spectrum Distinguishability}
    \\ For every pair of non-isomorphic graphs $G=(V,E)$ and $G'=(V,E')$, there exists a QAOA depth $p$ and parameters $\boldsymbol{\gamma},\boldsymbol{\beta} \in [0,2\pi]^p$ such that
    \[
    \operatorname{spec}(M_G) \neq \operatorname{spec}(M_{G'})
    \]
\end{conjecture}


\section{Testing graph isomorphism using IQP circuits}

{Now we move on to considering IQP circuits.} To test whether the Hamming weight distribution produced by the IQP circuit in Definition~\ref{def: IQP circuit} is useful for distinguishing graphs, we compared the full simulated Hamming weight distributions corresponding to all connected graphs with $|V|<8$ at 2000 equally spaced $\theta$ values in the interval $[0,\frac{\pi}{2}]$. This computation found that there exist pairs of 7-vertex graphs with Hamming weight distributions that are numerically indistinguishable for all 2000 values of $\theta$. This suggests that the Hamming weight distribution produced by the IQP circuit does not, in general, contain sufficient information to distinguish non-isomorphic graphs.

This limited distinguishing power is consistent with a structural restriction of the IQP circuits. Their output distributions are supported only on bitstrings of even Hamming weight. Furthermore, the Hamming weight distribution discards all information about how probability is distributed among individual bitstrings within each sector. We first establish the parity restriction.
 
\begin{lemma}{\textbf{The IQP circuit distribution has only even Hamming weight support}}\label{lemma: IQP support has only even Hamming weight}
\end{lemma}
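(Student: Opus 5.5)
The plan is to expand the IQP unitary as a product of terms in $X$-strings and track the parity of every string that appears. Each edge gate factorises as
\[
\exp(i\theta X_iX_j) = \cos\theta\, I + i\sin\theta\, X_iX_j,
\]
because $(X_iX_j)^2=I$. Multiplying these factors over all $(i,j)\in E$ and expanding writes $\prod_{(i,j)\in E}\exp(i\theta X_iX_j)$ as a linear combination of products $\prod_{(i,j)\in S} X_iX_j$ over subsets $S\subseteq E$. All $X$ operators commute, so each such product equals $X^{\mathbf{a}} := \bigotimes_k X_k^{a_k}$ for some $\mathbf{a}\in\mathbb{F}_2^n$.

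Next I will show that every such $\mathbf{a}$ has even Hamming weight. The coordinate $a_k$ is the parity of the degree of vertex $k$ in the edge subgraph $S$. Each edge flips exactly two coordinates, so $\sum_k a_k \equiv 2|S| \equiv 0 \pmod 2$. By the handshake argument the total degree is even, and hence $|\mathbf{a}|$ is even.

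Applying the expansion to $\ket{0^n}$ uses $X^{\mathbf{a}}\ket{0^n}=\ket{\mathbf{a}}$. This gives
\[
\ket{\mathrm{IQP}_G}=\sum_{\mathbf{a}\,:\,|\mathbf{a}| \text{ even}} c_{\mathbf{a}}\ket{\mathbf{a}},
\]
so the amplitude $\braket{x|\mathrm{IQP}_G}$, and therefore $p_x$, vanishes whenever $|x|$ is odd.

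As an alternative cleaner route I could use a symmetry argument. The operator $Z^{\otimes n}$ anticommutes with each $X_i$, so it commutes with every $X_iX_j$ and hence with the whole IQP unitary. Since $Z^{\otimes n}\ket{0^n}=\ket{0^n}$, the output state lies in the $+1$ eigenspace of $Z^{\otimes n}$, which is spanned by the even-weight basis states. I expect no real obstacle here; the only point needing care is the parity bookkeeping, or equivalently checking the commutation, and I would present the symmetry argument as the main proof.
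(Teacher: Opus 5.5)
Your proof is correct, and both of your routes differ from the paper's. The paper conjugates by $H^{\otimes n}$ to turn each gate into $e^{i\theta Z_iZ_j}$ and writes each amplitude as the Fourier sum $2^{-n}\sum_{x} e^{i\theta\sum_{(i,j)\in E}(-1)^{x_i\oplus x_j}}(-1)^{x\cdot y}$. It then pairs each $x$ with its complement $\bar{x}=x\oplus(1,\dots,1)$. The Ising phase is unchanged under this pairing, while $(-1)^{x\cdot y}+(-1)^{\bar{x}\cdot y}=(-1)^{x\cdot y}\bigl(1+(-1)^{|y|}\bigr)$ vanishes for odd $|y|$. This is the same symmetry as your main (commutation) argument, seen in the other basis: the global flip $x\mapsto\bar{x}$ is the action of $X^{\otimes n}$, and Hadamard conjugation turns it into your $Z^{\otimes n}$. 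You obtain it at the operator level in a few lines, with no explicit amplitude computation. It also extends immediately to edge-dependent angles, or to any generators that commute with $Z^{\otimes n}$. Your edge-subset expansion is a genuinely different, combinatorial route, via the handshake lemma applied to the odd-degree set of each subgraph $S\subseteq E$. It also gives Lemma~\ref{lemma: IQP dist at pi/2} for free, since at $\theta=\pi/2$ the factor $\cos\theta$ kills every term except $S=E$. What the paper's longer computation buys is the explicit Fourier-sum formula for $P(y)$. The paper reuses that formula in the proof of Lemma~\ref{lemma: IQP dist at pi/4} to match the Hidden Linear Function distribution.
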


\begin{proof}
    Let us start by writing the IQP circuit in the following form: 
    \begin{align*}
    \prod_{(i,j)\in E}e^{i\theta X_i X_j}\ket{0^n} &= H^{\otimes n}\left(\prod_{(i,j)\in E}e^{i\theta Z_i Z_j}\right)H^{\otimes n}\ket{0^n} \\ 
    &= \frac{1}{\sqrt{2^n}}\sum_{x\in\{0,1\}^n}H^{\otimes n}e^{i\theta\sum_{(i,j)\in E}(-1)^{x_i\oplus x_j}}\ket{x}\\
    &= \frac{1}{2^n}\sum_{x\in\{0,1\}^n}e^{i\theta\sum_{(i,j)\in E}(-1)^{x_i\oplus x_j}}\sum_{y\in\{0,1\}^n}(-1)^{x\cdot y}\ket{y}.
    \end{align*}
    Now group the sum over all bitstrings x into pairs $(x,\bar{x})$ where $\bar{x} = x \oplus (1,\dots, 1)$, we can show that cancellation occurs for all odd Hamming weight outputs.
    
    Since there are $2^n$ bitstrings, we always have an even number of bitstrings, so we can pair them together in this way.

    For every $(i,j)\in E$, $\bar{x}_i\oplus\bar{x}_j=x_i\oplus x_j$, so the graph-dependent phase factor is identical for $x$ and $\bar{x}$.

    Focusing on the term $(-1)^{x\cdot y}$, for each pair $(x,\bar{x})$, we get the contribution
    \begin{align*}
        (-1)^{x\cdot y} + (-1)^{\bar{x}\cdot y} &= (-1)^{x\cdot y} + (-1)^{(x \oplus (1,\dots, 1))\cdot y}\\
            &= (-1)^{x\cdot y} + (-1)^{x\cdot y}\cdot(-1)^{\bigoplus_{i}y_i}\\
            &= (-1)^{x\cdot y}(1 + (-1)^{\bigoplus_{i}y_i})\\
            &= \begin{cases}
                2\cdot (-1)^{x\cdot y} & \text{if the hamming weight of y is even}\\
                0 & \text{if the hamming weight of y is odd}
            \end{cases}
    \end{align*}
    Therefore, the bitstrings of odd Hamming weight do not contribute to the probability distribution for the IQP circuit, so we only need to consider bitstrings with even Hamming weight.
\end{proof}

However, the Hamming weight distribution does not retain any information regarding the individual probabilities within each Hamming weight group. We therefore introduce the following fine-grained invariant.

\begin{definition}{\textbf{Probability Multiplicity Within Hamming Weight Groups}}
\\ For a graph $G$, let $P_G:\{0,1\}^n\to[0,1]$ denote the probability distribution produced by its corresponding IQP or QAOA circuit. For each $k\in\{0,\dots,n\}$ and $p\in[0,1]$, define
\[
\mathcal{M}_{G}^k(p) \coloneqq \left|\{y\in\{0,1\}^n:|y|=k,\ P_G(y)=p\}\right|.
\]
Thus, $\mathcal{M}_{G}^k(p)$ is the number of bitstrings with Hamming weight $k$ that occur with probability $p$.
\end{definition}

\begin{restatable}{theorem}{iqpdistinvariant}\label{thrm: IQP dist invariant}
    \textbf{Hamming Weight Probability Multiplicity Invariant}
    \\ Let $P_G,P_{G'}:\{0,1\}^n\to[0,1]$ be the probability distributions produced by the IQP or QAOA circuits associated with graphs $G$ and $G'$, using identical circuit parameters.

If $G \cong G'$, then
\[
\mathcal{M}_{G}^k(p) = \mathcal{M}_{G'}^k(p) \qquad \forall k\in\{0,\dots,n\},\ p\in[0,1].
\]
Consequently, if there exist \(k\in\{0,\dots,n\}\) and \(p\in[0,1]\) such that
\[
\mathcal{M}_{G}^k(p) \neq \mathcal{M}_{G'}^k(p),
\]
then $G$ and $G'$ are non-isomorphic.

Moreover,
\[
\mathcal{M}_{G}^k(p) = \mathcal{M}_{G'}^k(p) \qquad \forall k\in\{0,\dots,n\},\ p\in[0,1]
\]
if and only if there exists a Hamming weight preserving bijection
\[
\phi:\{0,1\}^n\to\{0,1\}^n
\]
such that
\[
P_{G'}(y) = P_G(\phi(y))
\qquad
\forall y\in\{0,1\}^n.
\]
Hence, the probability multiplicities $\{\mathcal{M}_{G}^k(p)\}_{k,p}$ form a complete invariant under arbitrary Hamming weight preserving bijections of the circuit outcomes.
\end{restatable}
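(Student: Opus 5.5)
The proof has three parts: invariance under isomorphism, the contrapositive, and the bijection characterisation.

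\emph{Step 1 (invariance).} Suppose $G\cong G'$ via $\pi\in S_n$. By Lemma~\ref{lemma: IQP permuting nodes permutes bits} for IQP, or Lemma~\ref{lemma: QAOA permuting nodes permutes bits} for QAOA, we have $\ket{\mathrm{IQP}_{G'}}=U_\pi\ket{\mathrm{IQP}_G}$ (respectively for QAOA). Hence $P_{G'}(y)=P_G(\pi(y))$ for all $y$, where $\pi$ acts on bitstrings by permuting coordinates (or by $\pi^{-1}$, depending on convention; this is immaterial). The key observation is that $y\mapsto\pi(y)$ is a bijection of $\{0,1\}^n$ that preserves Hamming weight. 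Therefore, for fixed $k$ and $p$, it maps
\[
S'=\{y:|y|=k,\ P_{G'}(y)=p\}
\]
bijectively onto
\[
S=\{z:|z|=k,\ P_G(z)=p\}.
\]
Indeed, $|\pi(y)|=|y|=k$ and $P_G(\pi(y))=P_{G'}(y)=p$, and the inverse map is $\pi^{-1}$. This gives $\mathcal{M}^k_{G'}(p)=\mathcal{M}^k_G(p)$. The ``consequently'' clause is the contrapositive.

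\emph{Step 2 (the if direction of the characterisation).} This is the same argument as Step 1. Given a Hamming-weight-preserving bijection $\phi$ with $P_{G'}=P_G\circ\phi$, the map $\phi$ sends $S'$ into $S$ and $\phi^{-1}$ sends $S$ into $S'$, so the two multiplicities agree. Step 1 is thus the special case $\phi=\pi$.

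\emph{Step 3 (the only if direction).} Assume all multiplicities agree and construct $\phi$ sector by sector and level set by level set. For each $k$, the weight-$k$ sector $W_k=\{y:|y|=k\}$ is finite, and it partitions into the level sets
\[
L^k_G(p)=\{y\in W_k: P_G(y)=p\}.
\]
Only finitely many $p$ give a nonempty level set. By hypothesis $|L^k_{G'}(p)|=|L^k_G(p)|$ for every $p$, so we may choose an arbitrary bijection
\[
\phi_{k,p}:L^k_{G'}(p)\to L^k_G(p).
\]
Define $\phi$ on $\{0,1\}^n$ as the union of all the maps $\phi_{k,p}$.

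We then verify three properties of $\phi$:
\begin{enumerate}
\item It is well defined, because the sets $L^k_{G'}(p)$ partition $\{0,1\}^n$ as $k$ and $p$ vary.
\item It is a bijection, because the target sets $L^k_G(p)$ also partition $\{0,1\}^n$ and each piece is a bijection.
\item It preserves Hamming weight and satisfies $P_G(\phi(y))=p=P_{G'}(y)$ by construction.
\end{enumerate}
The completeness statement is a restatement of this equivalence. Two distributions have equal multiplicity data exactly when they are related by a weight-preserving relabelling.

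No step is a real obstacle. The only care needed is bookkeeping: $p$ ranges over the uncountable set $[0,1]$, but only finitely many level sets are nonempty, and $\mathcal{M}^k(p)=0$ agrees trivially for all other $p$. The other point of care is the direction convention for $\pi$ acting on bitstrings in Lemma~\ref{lemma: HW dist invariance under perm}.
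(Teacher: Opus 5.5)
Your proposal is correct and follows the paper's proof essentially step for step. It covers invariance via the permutation lemmas and the weight-preservation of $\pi$, then the contrapositive, then the equivalence obtained by gluing sector-wise bijections $\phi_k$ into a global weight-preserving $\phi$, with your explicit level-set bijections $L^k_{G'}(p)\to L^k_G(p)$ simply spelling out the step the paper asserts directly (and your $\pi$ versus $\pi^{-1}$ remark correctly flags a harmless convention mismatch between Appendix~A and the theorem's proof).
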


The proof is deferred to Appendix~\ref{appendix: Proof of HW probability multiplicity invariant}.

Every vertex permutation $\pi\in S_n$ induces a Hamming weight preserving bijection of bitstrings and is therefore a particular example of the map $\phi$ appearing in Theorem~\ref{thrm: IQP dist invariant}. However, a general Hamming weight preserving bijection $\phi$ need not be induced by a single permutation of the bit positions; it may rearrange bitstrings independently within each Hamming weight group. Thus, equality of the probability multiplicities implies only that the distributions are indistinguishable by this invariant. It does not imply that the graphs are isomorphic.

\begin{corollary}{\textbf{Probability Multiplicities Refine the Hamming Weight Distribution}}  \label{cor:probability_multiplicities_refine_HW}
\\For every graph $G$ and Hamming weight $k$,
\[
P_{G,HW}(k) = \sum_{\substack{p\in[0,1]\\\mathcal{M}_{G}^k(p)>0}} p\mathcal{M}_{G}^k(p).
\]
Consequently, the probability multiplicities $\{\mathcal{M}_{G}^k(p)\}_{k,p}$ determine the complete Hamming weight distribution.

In particular,
\[
\mathcal{M}_{G}^k(p) = \mathcal{M}_{G'}^k(p) \quad \forall k,p
\]
implies
\[
P_{G,HW}(k)=P_{G',HW}(k) \quad \forall k.
\]
The converse does not hold in general. Therefore, the probability multiplicities within Hamming weight groups form a strictly more fine-grained invariant of the output distribution than the Hamming weight distribution alone.
\end{corollary}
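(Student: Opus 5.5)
The statement to prove is Corollary~\ref{cor:probability_multiplicities_refine_HW}. The identity comes from partitioning the weight-$k$ sector by probability value. Because the level sets partition the sector, the multiplicities determine the Hamming weight distribution. For the converse I will exhibit an explicit counterexample.

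\textbf{Step 1: the identity.} Fix $k$ and let $S_k=\{y:|y|=k\}$. This set is finite, of size $\binom{n}{k}$. So the set of values $\{P_G(y):y\in S_k\}$ is finite, and $\mathcal{M}_G^k(p)>0$ for only finitely many $p$; the sum in the statement is therefore a finite sum. The level sets $L_p=\{y\in S_k:P_G(y)=p\}$, taken over the $p$ with $\mathcal{M}_G^k(p)>0$, partition $S_k$. Hence
\[
P_{G,HW}(k)=\sum_{y\in S_k}P_G(y)=\sum_{p}\sum_{y\in L_p}p=\sum_p p\,\mathcal{M}_G^k(p),
\]
using Definition~\ref{def: Hamming weight distribution}.

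\textbf{Step 2: the implication.} The right-hand side of the identity depends only on the function $p\mapsto\mathcal{M}_G^k(p)$. So if the multiplicities agree for all $k$ and $p$, the two sums agree for each $k$, and the Hamming weight distributions coincide.

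\textbf{Step 3: the converse fails.} At the level of distributions, a counterexample is immediate. Take two distributions with equal sector totals but different splittings within a sector. For example, in sector $k=2$, let one distribution be uniform over two strings and the other put all the mass on one string. Their multiplicities differ while their $P_{HW}$ agree.

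To make the claim meaningful for graphs, I would cite the paper's numerics instead. The 7-vertex pairs with identical IQP Hamming weight distributions for all tested $\theta$ are reported to be distinguished by the multiplicity invariant. That is a concrete pair where $P_{HW}$ agrees but the multiplicities do not.

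The only delicate point is this converse. A rigorous graph-based counterexample would need exact, not floating-point, equality of the Hamming weight distributions. I would either state the claim at the level of general distributions $P$, where the two-line example suffices, or present the graph example as numerical evidence. The abstract example alone establishes that the invariant is strictly finer as a function of output distributions.
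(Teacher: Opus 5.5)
Your proposal is correct and follows the paper's argument: you partition the weight-$k$ sector into level sets of $P_G$ to get the identity, and you refute the converse at the level of output distributions by noting that equal sector totals can be split into different individual probabilities. Your explicit two-string example (which needs $n\ge 3$) and your caveat that a rigorous graph-derived counterexample would require exact rather than floating-point equality only make this more precise, since the paper's proof also argues the converse purely at the level of distributions.
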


\begin{proof}
For a fixed $k$, the quantity $\mathcal{M}_{G}^k(p)$ counts the number of bitstrings of Hamming weight $k$ that each occur with probability $p$. Hence their total contribution to the probability of obtaining Hamming weight $k$ is $p\mathcal{M}_{G}^k(p)$. Summing over all probability values that occur within the group gives
\[
P_{G,HW}(k) = \sum_{\substack{p\in[0,1]\\\mathcal{M}_{G}^k(p)>0}} p\mathcal{M}_{G}^k(p).
\]

The converse does not hold because two Hamming weight groups can have the same total probability while containing different individual probability values.
\end{proof}

In our numerical experiments, probability multiplicities were computed using exact equality of the double-precision floating-point probability values. Under this numerical comparison, this fine-grained invariant distinguished all the non-isomorphic graphs considered in this project, using their full simulated IQP distributions, for at least one of the 2000 equally-spaced $\theta$ values in $[0,\frac{\pi}{2}]$. This evidence motivates the following conjecture.

\begin{conjecture}{\textbf{IQP Circuit Distinguishability}}\label{conj: IQP circuit distinguishability}
    \\For every pair of graphs $G=(V,E)$ and $G'=(V,E')$ where $G \ncong G'$, there exist some $\theta \in [0,\frac{\pi}{2}]$, $k\in\{0,\dots,n\}$ and $p\in[0,1]$ such that 
    \[
    \mathcal{M}_{G}^k(p) \neq \mathcal{M}_{G'}^k(p).
    \]
\end{conjecture}

Although this may provide a valid method for distinguishing non-isomorphic graphs, it is not obviously a practical one. The invariant in Theorem~\ref{thrm: IQP dist invariant} depends on the exact equality of individual outcome probabilities and their multiplicities. Therefore, directly evaluating it requires resolving the probabilities within each Hamming weight sector, potentially over an exponentially large set of outcomes. Moreover, exact equality is a numerically fragile comparison. Floating-point round-off can perturb probabilities that are theoretically identical in classical simulation, while finite-shot error and experimental noise introduce additional perturbations on quantum hardware.

Thus, while our numerical results provide evidence that IQP output distributions may contain enough information to distinguish non-isomorphic graphs, this invariant does not by itself yield a practical quantum graph isomorphism algorithm.

We do not state the analogous conjecture for QAOA circuits, since Conjecture~\ref{conj: QAOA Hamming weight distinguishability} already proposes that the coarser Hamming weight distribution is sufficient to distinguish non-isomorphic graphs.


\section{Analysis of IQP circuits for specific parameters}

Although IQP circuits may provide worse distinguishing power than QAOA circuits, we can study how their distributions encode graph information at specific values of $\theta$.

\begin{lemma}\label{lemma: IQP dist at pi/2}
    For a graph $G=(V,E)$, at $\theta=\pi/2$, the graph's corresponding IQP circuit distribution becomes
\[
    p(y) = |\bra{y} \prod_{k = 1}^n ( X_k)^{\deg(k)}\ket{0^n}|^2
      =  \begin{cases}
        1 & \text{if } y_k = \deg(k) \pmod 2 \quad \forall k \in \{1,\dots,n\} \\
        0 & \text{otherwise}
    \end{cases}
\]
for $y\in\{0,1\}^n$, where $\deg(k)$ denotes the degree of vertex $k$.
\end{lemma}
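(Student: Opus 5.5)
The plan is a direct computation at the level of individual gates. Since every gate in Definition~\ref{def: IQP circuit} is a function of commuting Pauli operators, I do not expect to need the Fourier expansion used in Lemma~\ref{lemma: IQP support has only even Hamming weight}.

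\textbf{Step 1: evaluate a single gate.} Since $(X_iX_j)^2=I$, we have
\[
e^{i\theta X_iX_j}=\cos\theta\, I+i\sin\theta\, X_iX_j .
\]
At $\theta=\pi/2$ this gives $e^{i\frac{\pi}{2}X_iX_j}=iX_iX_j$. Hence each edge gate reduces to a Pauli $X_iX_j$ times the global phase $i$.

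\textbf{Step 2: collapse the product.} All the operators $X_k$ commute, and $X_k^2=I$. Therefore
\[
\prod_{(i,j)\in E}iX_iX_j=i^{|E|}\prod_{k=1}^n X_k^{\,\deg(k)} ,
\]
because $X_k$ appears once for every edge incident to $k$, that is, exactly $\deg(k)$ times. The exponent may then be reduced modulo $2$, so this is $i^{|E|}\prod_k X_k^{\deg(k)\bmod 2}$. This step is the only place where any care is needed. The point to keep straight is that the count of appearances of each $X_k$ is the vertex degree, which follows because the graph is simple and each edge contributes one $X$ to each of its endpoints. I expect this counting to be the main obstacle, although it is minor.

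\textbf{Step 3: read off the distribution.} The global phase $i^{|E|}$ drops out when taking $|\cdot|^2$, which gives the first equality in the statement. Applying $\prod_k X_k^{\deg(k)\bmod 2}$ to $\ket{0^n}$ yields the computational basis state $\ket{d}$, where $d_k=\deg(k)\bmod 2$. Then $|\braket{y|d}|^2=\delta_{y,d}$, which is the stated case split. This deterministic output is also consistent with the even-weight support of Lemma~\ref{lemma: IQP support has only even Hamming weight}. As a sanity check, the handshake lemma shows that the number of odd-degree vertices is even, so $|d|$ is indeed even.
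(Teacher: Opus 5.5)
Your proposal is correct and follows the paper's proof essentially step for step: the paper also writes each gate as $\cos\theta\, I + i\sin\theta\, X_iX_j$, collapses the product to $i^{|E|}\prod_k X_k^{\deg(k)}$, discards the global phase, and reads off the single computational basis state. Your explicit reduction of the exponents modulo $2$ and the handshake-lemma consistency check with the even-weight support are minor additions that the paper leaves implicit.
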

\begin{proof}
    Using the fact that for a unitary, Hermitian operator $U$, $e^{i\theta U} = \cos{\theta}I + i\sin{\theta} U$,
    \begin{align*}
        p(y)&=|\bra{y}\prod_{(i,j) \in E}e^{i\frac{\pi}{2}X_i X_j}\ket{0^n}|^2
        \\ &=|\bra{y}\prod_{(i,j) \in E}\cos{\frac{\pi}{2}}I + i\sin{\frac{\pi}{2}} (X_i X_j) \ket{0^n}|^2
        \\ &=|\bra{y}\prod_{(i,j) \in E}i (X_i X_j) \ket{0^n}|^2
        \\ &= |\bra{y} i^{|E|} \prod_{k = 1}^n ( X_k)^{\deg(k)}\ket{0^n}|^2
        \\ &= |\bra{y} \prod_{k = 1}^n ( X_k)^{\deg(k)}\ket{0^n}|^2
        \\ &=  \begin{cases}
        1 & \text{if } y = y' \text{, where } y'_k = \deg(k) \pmod 2 \\
        0 & \text{otherwise}
    \end{cases}
    \end{align*}
\end{proof}

\begin{restatable}{lemma}{iqpbecomesclifford}\label{lemma: IQP becomes Clifford at pi/4}
    When $\theta = \frac{\pi}{4}$, the IQP circuit's distribution can be efficiently simulated classically.
\end{restatable}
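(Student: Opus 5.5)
The plan is to show that at $\theta=\pi/4$ every gate of the circuit in Definition~\ref{def: IQP circuit} is a Clifford gate. Once the whole circuit is Clifford acting on the stabiliser state $\ket{0^n}$, the Gottesman--Knill theorem gives efficient classical simulation. Both computing individual output probabilities and sampling then take $\mathrm{poly}(n)$ time.

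First, I would observe that $e^{i\frac{\pi}{4}X_iX_j}=\frac{1}{\sqrt2}(I+iX_iX_j)$. Conjugation by this gate maps Paulis to Paulis. For any Pauli $P$ that anticommutes with $X_iX_j$, we have $e^{i\frac{\pi}{4}X_iX_j}Pe^{-i\frac{\pi}{4}X_iX_j}=iX_iX_jP$, which is again a Pauli up to a phase. Paulis that commute with $X_iX_j$ are left fixed. Hence each edge gate is Clifford, and the product over $E$ is Clifford.

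Equivalently, I would make this concrete by using the identity from the proof of Lemma~\ref{lemma: IQP support has only even Hamming weight}. The circuit equals $H^{\otimes n}\prod_{(i,j)\in E}e^{i\frac{\pi}{4}Z_iZ_j}H^{\otimes n}\ket{0^n}$. Up to a global phase, $e^{i\frac{\pi}{4}Z_iZ_j}$ equals $\mathrm{CZ}_{ij}$ composed with $S^{\dagger}_i S^{\dagger}_j$, with the sign convention to be checked. This follows because $\frac{\pi}{4}(-1)^{x_i\oplus x_j}=\frac{\pi}{4}-\frac{\pi}{2}(x_i+x_j)+\pi x_ix_j$. The state is therefore $H^{\otimes n}$ applied to a graph state with local $S$-powers $S^{-\deg(k)}$ on each vertex $k$. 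Its amplitudes are
\[
\frac{1}{2^n}\sum_{x}i^{|E|}(-1)^{\sum_{(i,j)\in E}x_ix_j}\, i^{-\sum_k \deg(k)x_k}(-1)^{x\cdot y}.
\]
This is exactly a quadratic-form sum of the type in the Hidden Linear Function problem of Bravyi, Gosset and K\"onig. I would record the stabiliser tableau, obtained by conjugating $Z_k$ through the gates, or use this exponential-sum form. Either way, marginals can be computed by Gaussian elimination over $\mathbb{F}_2$ (or $\mathbb{Z}_4$) in $O(n^3)$ time. Sampling then follows bit by bit via the chain rule of conditional marginals.

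The only delicate point is bookkeeping of phases and signs in the decomposition into $\mathrm{CZ}$ and $S$ gates. This only affects which Clifford we obtain, not the fact that it is Clifford, so the main claim is robust. I would lean on the general Pauli-conjugation argument for the proof and present the explicit decomposition as the connection to the later characterisation.
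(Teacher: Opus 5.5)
Your proof is correct. Its skeleton matches the paper's: each edge gate is Clifford at $\theta=\pi/4$, and Gottesman--Knill finishes. Where you differ is in how you certify the Clifford property.

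The paper conjugates by $H_i\otimes H_j$ and compiles $e^{i\frac{\pi}{4}Z\otimes Z}\equiv (S^\dagger\otimes S^\dagger)\,C(Z)$, using $C(Z)=(I\otimes H)\,\mathrm{CNOT}\,(I\otimes H)$ and $S^\dagger=S^3$. It therefore relies on $\{H,S,\mathrm{CNOT}\}$ generating the Clifford group, and on exactly the phase bookkeeping you flagged as delicate.

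Your primary argument checks the normalizer property directly. Every Pauli $P$ either commutes with $Q=X_iX_j$ and is fixed, or anticommutes with it and is sent to $iQP$. This needs no generating-set theorem and no sign conventions. It works verbatim for $e^{i\frac{\pi}{4}Q}$ with any Pauli string $Q$, and it hands you the stabiliser-tableau update rule directly. The paper's route instead produces an explicit gate-level Clifford circuit that any standard simulator can take as input.

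Your secondary decomposition is exactly the paper's. The sign is indeed $S^\dagger$, as your own phase $i^{-\deg(k)x_k}$ confirms. Your exponential sum is the one the paper uses in Lemma~\ref{lemma: IQP dist at pi/4}. One small slip there: the global prefactor should be $e^{i\pi|E|/4}$, not $i^{|E|}$. This is harmless because it drops out of $|\cdot|^2$.
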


The proof is deferred to Appendix~\ref{appendix: proof that IQP becomes Clifford at pi/4}

{Next, we use this to study the form of the resulting probability distributions. Define} the degree-parity matrix
\[
D= \operatorname{diag}(\deg(1)\bmod 2,\deg(2)\bmod 2,\dots,\deg(n)\bmod 2)\in\mathbb{F}_2^{n\times n}.
\]
Thus, whenever $A+D$ is used below, the matrix arithmetic is over $\mathbb{F}_2$.

Define further the $\mathbb{Z}_4$-valued quadratic form
    \[
    q(x)=2\sum_{1\leq i<j\leq n}A_{ij}x_i x_j+3\sum_{k=1}^{n}\deg(k)x_k \pmod 4,
    \]
    and the subspace
    \[
    \mathcal{L}_q=\{x\in\mathbb{F}_2^n:q(x\oplus z)\equiv q(x)+q(z)\pmod 4 \hspace{3mm}\forall z\in\mathbb{F}_2^n\}.
    \]

\begin{lemma}\label{lemma: IQP dist at pi/4}
    When measuring the IQP circuit \ref{def: IQP circuit} with $\theta=\pi/4$, the probability of obtaining a bitstring $y\in\mathbb{F}_2^n$ is non-zero, $P(y)>0$, if and only if
    \[
    q(x)\equiv2y^Tx\pmod4 \hspace{3mm}\forall x\in\mathcal{L}_q.
    \]
    Furthermore, $P(y)$ is uniform on the set of all solutions $y$ and $\mathcal{L}_q=\ker_{\mathbb{F}_2}(A+D).$

    More explicitly, let
    \[
    r=\operatorname{rank}_{\mathbb{F}_2}(A+D)
    \]
    and let $y_0$ be any solution to
    \[
    q(x)\equiv2y_0^Tx\pmod4 \hspace{3mm}\forall x\in\mathcal{L}_q.
    \]
    Then the support of the distribution is the affine subspace
    \[
    \operatorname{supp}(P)=y_0\oplus\operatorname{im}_{\mathbb{F}_2}(A+D),
    \]
    and the complete probability distribution is
    \[
    P(y)=
    \begin{cases}
        2^{-r} & \text{if } y\in y_0\oplus\operatorname{im}_{\mathbb{F}_2}(A+D),\\
        0 & \text{otherwise}.
    \end{cases}
    \]
    Hence, the IQP circuit distribution at $\theta=\pi/4$ has exactly $2^r$ bitstrings in its support.
\end{lemma}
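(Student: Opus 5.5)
We will compute the amplitudes explicitly using the Hadamard-conjugated form from the proof of Lemma~\ref{lemma: IQP support has only even Hamming weight}. For $x\in\{0,1\}^n$ write $(-1)^{x_i\oplus x_j}=1-2(x_i\oplus x_j)$. With $\theta=\pi/4$ the phase becomes
\[
e^{i\frac{\pi}{4}\sum_{(i,j)\in E}(1-2(x_i\oplus x_j))}=e^{i\pi|E|/4}\, i^{-\sum_{(i,j)\in E}(x_i\oplus x_j)}.
\]
Since $x_i\oplus x_j=x_i+x_j-2x_ix_j$ as integers, we get
\[
-\sum_{(i,j)\in E}(x_i\oplus x_j)\equiv 2\sum_{i<j}A_{ij}x_ix_j+3\sum_k \deg(k)x_k \equiv q(x)\pmod 4.
\]
Hence, up to a global phase,
\[
\langle y|\mathrm{IQP}_G\rangle=\frac{1}{2^n}\sum_{x\in\mathbb{F}_2^n} i^{q(x)}(-1)^{y\cdot x}=\frac{1}{2^n}\sum_{x} i^{q(x)-2y^Tx}.
\]
This is a $\mathbb{Z}_4$-valued quadratic-form exponential sum, exactly the setting of the Hidden Linear Function problem of Bravyi, Gosset and K\"onig, and the rest of the proof evaluates it.

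The key structural step is to compute the defect $q(x\oplus z)-q(x)-q(z)\pmod 4$. Write $x\oplus z=x+z-2x\circ z$, where $\circ$ is the entrywise product. The linear part gives $3\sum_k\deg(k)(-2x_kz_k)\equiv 2\sum_k \deg(k)x_kz_k$. The quadratic part gives $2\sum_{i<j}A_{ij}(x_iz_j+z_ix_j)$, because all the correction terms carry a factor of $4$. The defect is therefore $2\,x^T(A+D)z \pmod 4$, a symmetric bilinear form over $\mathbb{F}_2$ multiplied by $2$. It follows immediately that $\mathcal{L}_q=\ker_{\mathbb{F}_2}(A+D)$. On $\mathcal{L}_q$, $q$ is additive, and it takes only even values there: setting $z=x$ gives $q(0)-2q(x)\equiv 2x^T(A+D)x$, so $2q(x)\equiv 0$. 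Hence $q|_{\mathcal{L}_q}=2\ell(x)$ for some linear functional $\ell$.

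Next we evaluate the sum by decomposing $\mathbb{F}_2^n=\mathcal{L}_q\oplus W$ for a complement $W$, so that $x=u+w$. Set $f(x)=q(x)-2y^Tx$, which has the same defect as $q$. Then $f(u\oplus w)=f(u)+f(w)+2u^T(A+D)w=f(u)+f(w)$ since $u\in\ker(A+D)$, and the sum factorises as
\[
\Bigl(\sum_{u\in\mathcal{L}_q}i^{f(u)}\Bigr)\Bigl(\sum_{w\in W}i^{f(w)}\Bigr).
\]
The first factor is a character sum of the linear functional $\ell(u)-y^Tu$ over $\mathcal{L}_q$. It equals $2^{n-r}$ if $q(u)\equiv 2y^Tu$ for all $u\in\mathcal{L}_q$, and $0$ otherwise; this gives the support condition. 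For the second factor we show $|\sum_{w\in W}i^{f(w)}|^2=2^{r}$. Expanding the square and substituting $w'=w\oplus v$ gives
\[
\sum_{v}i^{-f(v)}\sum_w(-1)^{w^T(A+D)v}.
\]
This requires care, since $f(w\oplus v)-f(w)=f(v)+2w^T(A+D)v$ and all arithmetic is mod $4$. Because $A+D$ restricted to $W$ is nondegenerate, the inner sum vanishes unless $v=0$, which leaves exactly $|W|=2^r$. Combining the factors, $P(y)=2^{-2n}\cdot 2^{2(n-r)}\cdot 2^r=2^{-r}$ on the support. Normalisation then forces the support to have exactly $2^r$ elements.

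Finally, the support condition is affine in $y$: $y$ and $y_0$ both satisfy it if and only if $(y\oplus y_0)^Tu=0$ for all $u\in\ker(A+D)$, that is, $y\oplus y_0\in\ker(A+D)^\perp=\operatorname{im}(A+D)$ by symmetry of $A+D$. Existence of some $y_0$ holds because the linear functional $\ell$ is always representable as $y_0^T u$. This confirms that $\operatorname{supp}(P)=y_0\oplus\operatorname{im}_{\mathbb{F}_2}(A+D)$, consistent with the count $2^r$. I expect the main obstacle to be the mod-$4$ bookkeeping: the reduction of the phase to $q$, and especially the Gauss-sum modulus computation on $W$. Here one must check carefully that the diagonal terms $\deg(k)x_kz_k$ really produce $D$, and that cross terms do not leave stray odd contributions.
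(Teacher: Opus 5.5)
Your proof is correct. It shares the paper's first two steps: reducing the amplitude to $2^{-n}\sum_x i^{q(x)}(-1)^{x\cdot y}$, and the defect identity $q(x\oplus z)\equiv q(x)+q(z)+2x^T(A+D)z \pmod 4$, which gives $\mathcal{L}_q=\ker_{\mathbb{F}_2}(A+D)$.

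You diverge at the central step. The paper matches $q$ to the Hidden Linear Function setting and invokes Lemma 2 of Bravyi, Gosset and K\"onig for both the support criterion and uniformity. It then obtains the value $2^{-r}$ by counting the affine support $y_0\oplus\operatorname{im}_{\mathbb{F}_2}(A+D)$. You instead evaluate the exponential sum yourself: you split $\mathbb{F}_2^n=\mathcal{L}_q\oplus W$, reduce the $\mathcal{L}_q$ factor to a character sum of $\ell(u)+y^Tu$ using $q|_{\mathcal{L}_q}=2\ell$, and compute the modulus of the $W$ factor as a Gauss sum. This amounts to reproving that lemma for this particular $q$.

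Your route is self-contained and produces $P(y)=2^{-r}$ directly from the computation rather than from uniformity plus counting. It also settles the existence of $y_0$ (extend $\ell$ by zero on $W$), which the paper's statement takes for granted. The paper's route is shorter, but relies on the match between this $q$ (linear coefficients $3\deg(k)$, hence the matrix $A+D$) and the hypotheses of the cited lemma.

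The one assertion you should justify explicitly is that $(w,v)\mapsto w^T(A+D)v$ is nondegenerate on $W$. Suppose $v\in W$ pairs to zero with all of $W$. By symmetry of $A+D$ it also pairs to zero with $\ker_{\mathbb{F}_2}(A+D)$, so $(A+D)v=0$. Hence $v\in W\cap\ker_{\mathbb{F}_2}(A+D)=\{0\}$.
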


\begin{proof}
    Starting with the expression for the IQP circuit obtained in the proof of Lemma~\ref{lemma: IQP support has only even Hamming weight}, for $y\in\{0,1\}^n$,
    \begin{align*}
        P(y) &= \left|\frac{1}{2^n}\sum_{x\in\{0,1\}^n}e^{i\theta\sum_{(i,j)\in E}(-1)^{x_i\oplus x_j}}(-1)^{x\cdot y}\right|^2\\
        &= \left|\frac{1}{2^n}\sum_{x\in\{0,1\}^n}e^{i\theta\sum_{(i,j)\in E}(1-2x_i)(1-2x_j)}(-1)^{x\cdot y}\right|^2\\
        &= \left|\frac{e^{i\theta|E|}}{2^n}\sum_{x\in\{0,1\}^n}e^{i\theta\left(-2\sum_{k=1}^{n}\deg(k)x_k+4\sum_{(i,j)\in E}x_ix_j\right)}(-1)^{x\cdot y}\right|^2.
    \end{align*}
    The factor $e^{i\theta|E|}$ is a global phase and therefore disappears when taking the absolute value squared.

    Substituting in $\theta=\frac{\pi}{4}$,
    \begin{align*}
        P(y) &= \left|\frac{1}{2^n}\sum_{x\in\{0,1\}^n}e^{-\frac{i\pi}{2}\sum_{k=1}^{n}\deg(k)x_k}e^{i\pi\sum_{(i,j)\in E}x_ix_j}(-1)^{x\cdot y}\right|^2\\
        &= \left|\frac{1}{2^n}\sum_{x\in\{0,1\}^n}i^{3\sum_{k=1}^{n}\deg(k)x_k+2\sum_{(i,j)\in E}x_ix_j+2x\cdot y}\right|^2\\
        &= \left|\frac{1}{2^n}\sum_{x\in\{0,1\}^n}i^{2\sum_{1\leq i<j\leq n}A_{ij}x_ix_j+3\sum_{k=1}^{n}\deg(k)x_k+2x\cdot y}\right|^2.
    \end{align*}

    Therefore, using the quadratic form $q(x)$ defined above,
    \[
    P(y)=\left|\frac{1}{2^n}\sum_{x\in\mathbb{F}_2^n}i^{q(x)}(-1)^{x\cdot y}\right|^2.
    \]
    This manipulation allows us to match the distribution of our specific IQP circuit to the Hidden Linear Function distribution considered by Bravyi, Gosset and K\"onig in their paper `Quantum advantage with shallow circuits'~\cite{Bravyi_2018}.

    To determine the subspace $\mathcal{L}_q$, note that
    \[
    q(x\oplus z)\equiv q(x)+q(z)+2z^T(A+D)x\pmod4.
    \]

    Thus $x\in\mathcal{L}_q$ if and only if
    \[
    2z^T(A+D)x\equiv0\pmod4 \hspace{3mm}\forall z\in\mathbb{F}_2^n.
    \]
    Equivalently,
    \[
    z^T(A+D)x=0\pmod2 \hspace{3mm}\forall z\in\mathbb{F}_2^n.
    \]
    This holds for every $z$ if and only if
    \[
    (A+D)x=0 \pmod2.
    \]
    Hence,
    \[
    \mathcal{L}_q=\ker_{\mathbb{F}_2}(A+D).
    \]

    We can now apply Lemma 2 of Bravyi, Gosset and K\"onig~\cite{Bravyi_2018}, which gives
    \[
    P(y)>0 \quad\Longleftrightarrow\quad q(x)\equiv2y^Tx\pmod4 \hspace{3mm}\forall x\in\mathcal{L}_q,
    \]
    and also shows that the probability distribution is uniform over the set of all such solutions $y$.

    We can make the form of this uniform distribution explicit. Let $y_0$ be any solution. Another bitstring $y$ is also a solution if and only if
    \[
    2y^Tx\equiv2y_0^Tx\pmod4 \hspace{3mm}\forall x\in\mathcal{L}_q,
    \]
    which is equivalent to
    \[
    (y\oplus y_0)^Tx=0\pmod2 \hspace{3mm}\forall x\in\mathcal{L}_q.
    \]
    Thus,
    \[
    y\oplus y_0\in\mathcal{L}_q^\perp,
    \]
    and consequently
    \[
    \operatorname{supp}(P)=y_0\oplus\mathcal{L}_q^\perp.
    \]

    Since $A+D$ is symmetric,
    \[
    \mathcal{L}_q^\perp=\left(\ker_{\mathbb{F}_2}(A+D)\right)^\perp=\operatorname{im}_{\mathbb{F}_2}\left((A+D)^T\right)=\operatorname{im}_{\mathbb{F}_2}(A+D).
    \]
    Therefore,
    \[
    \operatorname{supp}(P)=y_0\oplus\operatorname{im}_{\mathbb{F}_2}(A+D).
    \]

    Let
    \[
    r=\operatorname{rank}_{\mathbb{F}_2}(A+D).
    \]
    The image of $A+D$ is an $r$-dimensional vector space over $\mathbb{F}_2$, and therefore contains exactly $2^r$ elements. Hence the support of $P$ contains exactly $2^r$ bitstrings. Since Lemma 2 of~\cite{Bravyi_2018} shows that the distribution is uniform over these bitstrings, each supported bitstring has probability $2^{-r}$.
    We therefore obtain the full distribution
    \[
    P(y)=
    \begin{cases}
        2^{-r} & \text{if } y\in y_0\oplus\operatorname{im}_{\mathbb{F}_2}(A+D),\\
        0 & \text{otherwise}.
    \end{cases}
    \]
\end{proof}


\section{Conclusion and outlook}

In this work, we have investigated how much information about a graph can be extracted from the distributions of simple graph-dependent IQP and QAOA circuits. We first showed that suitable IQP and QAOA constructions contain enough information to reconstruct the underlying graph using a polynomial number of samples. Thus, although coarse statistics of the output distributions may fail to distinguish particular graph pairs, the distributions themselves do not fundamentally discard the adjacency information required to determine graph isomorphism.

Our numerical results show that the QAOA energy is an effective distinguishing statistic for many graph pairs, consistent with the observations of Szegedy~\cite{szegedy_2019_qaoa}. However, we also identified graph pairs for which additional information in the distribution provides a clear advantage at low circuit depth. In particular, for circular and M\"obius ladder graphs on $16$ vertices, the numerically computed value of $|\Delta E|$ is exactly zero for $p<4$, while for the corresponding $20$-vertex pair it is exactly zero for $p<5$. In both cases, the computed Hamming weight distributions are distinct for every depth tested. This demonstrates that graph-distinguishing information can be present in the QAOA output distribution before it becomes visible in the expectation value of the cost Hamiltonian.

We also investigated the matrix of two-qubit correlators $M_G$. Under a graph isomorphism this matrix undergoes a simultaneous permutation of its rows and columns, so its spectrum provides a graph invariant. Numerically, this invariant proved highly discriminating for cospectral regular graphs. At depth $p=2$, all 10,486 cospectral regular graph pairs considered in our experiments were distinguished by their ZZ correlation spectrum. We further showed that the invariant remains accessible when the correlation matrix is estimated from a finite number of measurements. If the exact average ZZ eigenvalue gap is $\delta>0$, then $O(n^2\log n/\delta^2)$ samples are sufficient to estimate the gap to within a fixed relative error with high probability. This provides an explicit connection between the exact graph invariant and the resources required to resolve it experimentally.

Our results for IQP circuits illustrate both the limitations and the structure of their output distributions. The Hamming weight distribution alone does not distinguish all non-isomorphic graphs even among the small instances considered here. However, a finer invariant, based on the multiplicities of individual probabilities within each Hamming weight sector, distinguished every non-isomorphic pair considered in our numerical experiments. We showed that this quantity is a complete invariant of the output distribution under arbitrary Hamming weight-preserving bijections. At particular circuit parameters, the IQP distribution also admits a direct graph-theoretic interpretation. At $\theta=\pi/2$, the output is determined by the parity of the vertex degrees, while at $\theta=\pi/4$, the distribution is uniform over an affine subspace $y_0\oplus\operatorname{im}_{\mathbb{F}_2}(A+D)$, where $y_0$ is a solution to the Hidden Linear Function problem.

These results leave open the central question of whether any of these approaches can lead to an efficient graph-isomorphism algorithm. In particular, it is necessary to understand how the depth and statistical separation required to distinguish difficult graph pairs scale with the number of vertices. 

In our investigation, we used randomly chosen parameters rather than optimising the circuits for distinguishability. An important direction for future work is therefore to investigate whether more effective parameter-selection strategies can improve distinguishability. Barren plateaus are commonly characterised by exponentially vanishing gradients~\cite{Larocca_2022}, but they are also closely associated with concentration of cost-function values around their mean~\cite{McClean2018BarrenPlateaus,Cerezo2021BarrenPlateaus,Arrasmith2022CostConcentration}. In the context of QAOA specifically, Brand\~ao \emph{et al.} showed that, for fixed parameters, the cost concentrates across typical problem instances, proving such concentration for low-depth MaxCut QAOA on large 3-regular graphs~\cite{Brandao2018QAOAConcentration}. This does not imply that the low-depth QAOA circuits considered here necessarily exhibit barren plateaus. Indeed, for certain shallow parametrised circuits, local cost functions have been shown to avoid exponentially vanishing gradients under appropriate assumptions~\cite{Cerezo2021BarrenPlateaus}. However, concentration across graph instances is potentially problematic for graph isomorphism, since distinguishability relies precisely on differences between graph-dependent statistics. Therefore, it is possible that, as the graph size increases, different graphs produce increasingly similar values of the statistics considered here, causing quantities such as the energy gap, Hamming weight TVD or ZZ eigenvalue gap to become exponentially small. This would lead to a corresponding increase in the number of measurements required to distinguish the graphs, potentially preventing an efficient graph-isomorphism algorithm even at low circuit depth. Determining whether, and how rapidly, such concentration occurs for the circuit families and graph invariants considered here is therefore an important direction for future work.

The exponential cost of full statevector simulation currently limits a direct numerical investigation of this scaling. One natural next step is therefore to study substantially larger graph instances using simulation methods that do not require construction of the complete output distribution. In particular, Pauli-propagation methods offer a promising route for estimating expectation values of Pauli observables in larger circuits~\cite{Rudolph2025PauliPropagation}. Such methods are particularly well suited to quantities such as the QAOA energy and low-order correlators, and could allow the scaling of the ZZ-based invariants to be explored at sizes beyond those accessible here. The Hamming weight distribution contains more global information and may require different approximation or sampling techniques.

There are also many possible graph invariants beyond those studied here. Higher-order correlators, invariants derived from larger correlation tensors, or other permutation-invariant functions of QAOA observables may retain graph information not captured by the energy, Hamming weight distribution, or ZZ spectrum. These could be tested systematically on larger collections of cospectral and strongly regular graphs, as well as on graph families specifically constructed to be difficult for classical graph-isomorphism and refinement procedures. Such experiments could help identify which structural properties influence the circuit depth required for distinguishability via each invariant.

Finally, an important next step is to move beyond classical simulation and test these invariants on quantum hardware. The low circuit depths at which several of the distinguishing effects observed here occur make this a natural setting in which to study the combined effects of finite sampling, gate noise and hardware connectivity. In particular, experimental measurements of the ZZ correlation spectrum would provide a test of the finite-shot analysis developed here, while comparisons with exact small-instance results would allow the robustness of the proposed invariants to realistic noise to be quantified.

More broadly, the results of this work suggest that simple graph-dependent quantum circuits can encode substantially more structural information than is visible through a single expectation value. Determining whether this information can be extracted with polynomial resources for increasingly difficult graph families is the key question for establishing whether such circuits can provide a useful quantum approach to graph isomorphism.


\subsection*{Acknowledgments}

We would like to thank Freddie Burns for helpful preliminary work on using IQP circuits to test graph isomorphism, and for guidance on efficient implementation of IQP circuits, and would also like to thank Mario Szegedy for comments on a previous version.

\subsection*{Statement on use of AI}

We have used LLMs for finding relevant references,  feedback on our text, to help improve the clarity of the exposition, to check correctness of results, as well as for some coding tasks. The authors take full responsibility for the correctness of all parts of the paper.


\bibliographystyle{plain}
\bibliography{references}


\begin{appendices}
    
\section{Proof of Lemma~\ref{lemma: IQP permuting nodes permutes bits}}\label{appendix: proof of IQP permuting nodes permutes bits}
\iqppermutationlemma* 

\begin{proof}
    Let $U_\pi$ be the unitary operator that permutes qubits according to the permutation $\pi$, so that
    \[
    U_\pi X_i U_\pi^\dagger = X_{\pi(i)}.
    \]
    Since 
    \[
    \prod_{(i,j) \in E}\exp\left(i\theta X_i X_j\right)\ket{0^n}
    \]
    is the IQP circuit that corresponds to the graph $G$, the IQP circuit for the graph $G'$, $\ket{\text{IQP}_{G'}}$, is
    \begin{align*}  
    \prod_{(i,j) \in E}\exp\left(i\theta X_{\pi(i)} \otimes  X_{\pi(j)}\right)\ket{0^n}&=\prod_{(i,j) \in E}\exp\left(i\theta U_\pi( X_i \otimes  X_j)U_\pi^\dagger\right)\ket{0^n}.\\
    &=\prod_{(i,j) \in E}U_\pi\exp\left(i\theta( X_i \otimes  X_j)\right)U_\pi^\dagger\ket{0^n}\\
    &=U_\pi\left(\prod_{(i,j) \in E}\exp\left(i\theta( X_i \otimes  X_j)\right)\right)U_\pi^\dagger\ket{0^n}\\
    &=U_\pi\prod_{(i,j) \in E}\exp\left(i\theta( X_i \otimes  X_j)\right)\ket{0^n}\\
    &=U_\pi \ket{\text{IQP}_G},
    \end{align*}
    where we have used the fact that any permutation matrix is unitary, $\ket{0^n}$ is invariant under permutations, and for a unitary operator U and any operator Y, 
    \[
    e^{UYU^\dagger} = Ue^YU^\dagger.
    \]
    Also note that we still index the product over $(i,j)\in E$ since we apply the permutation $\pi$ to the operators $X_i$ so that $X_i$ acts on qubit $\pi(i)$. \vspace{2mm}\\
    Finally, for any $x \in \mathbb{F}_2^n$,
    \begin{align*}
        |\langle x | \text{IQP}_{G'} \rangle|^2
    &= |\langle x | U_\pi|\text{IQP}_G \rangle|^2\\
    &=  |\langle \pi^{-1}(x) | \text{IQP}_G \rangle|^2,
    \end{align*}
    so the output distribution is permuted accordingly.
\end{proof}

\section{Proof of Lemma~\ref{lemma: QAOA permuting nodes permutes bits}}\label{appendix: proof of QAOA permuting nodes permutes bits}

\qaoapermutationlemma*

\begin{proof}
    Let $U_\pi$ be the unitary operator that permutes qubits so that
    \[
    X_{\pi(i)} = U_\pi X_i U_\pi^\dagger, \quad Z_{\pi(i)} =U_\pi Z_i U_\pi^\dagger.
    \]
    First, consider the cost Hamiltonian terms. For $(j,k)\in E$,
    \[
    C_{(\pi(j),\pi(k))} = \tfrac{1}{2}(I + Z_{\pi(j)} Z_{\pi(k)}) = U_\pi C_{(j,k)} U_\pi^\dagger.
    \]
    Similarly, for the mixer terms,
    \[
    X_{\pi(i)} = U_\pi X_i U_\pi^\dagger.
    \]
    Using the fact that for a unitary operator U and any operator Y, $e^{UYU^\dagger} = Ue^YU^\dagger$,
    \[
    e^{-i\gamma_q C_{(\pi(j),\pi(k))}} = U_\pi e^{-i\gamma_q C_{(j,k)}} U_\pi^\dagger, 
    \quad
    e^{-i\beta_q X_{\pi(i)}} = U_\pi e^{-i\beta_q X_i} U_\pi^\dagger.
    \]
    Thus the QAOA circuit for $G'$ is
    \begin{align*}
    \ket{\text{QAOA}_{G'}} 
    &= \prod_{q=0}^{p-1}\left(\prod_{i\in V}e^{-i\beta_q X_{\pi(i)}} \prod_{(j,k)\in E} e^{-i\gamma_q C_{(\pi(j),\pi(k))}} \right) H^{\otimes n} \ket{0^n} \\
    &= \prod_{q=0}^{p-1}\left(\prod_{i\in V}U_\pi e^{-i\beta_q X_i}U_\pi^\dagger \prod_{(j,k)\in E} U_\pi e^{-i\gamma_q C_{(j,k)}} U_\pi^\dagger \right) H^{\otimes n} \ket{0^n}.
    \end{align*}
    We simplify using the fact that $U_\pi$ is unitary and that $H^{\otimes n}\ket{0^n}$ is the equal superposition over all states, so it is invariant under permutations (i.e. $U_\pi^\dagger H^{\otimes n}\ket{0^n} = H^{\otimes n}\ket{0^n}$).
    \begin{align*}
        \ket{\text{QAOA}_{G'}} 
    &= U_\pi \left[
    \prod_{q=0}^{p-1}\left(\prod_{i\in V} e^{-i\beta_q X_i} \prod_{(j,k)\in E} e^{-i\gamma_q C_{(j,k)}} \right)
    \right] U_\pi^\dagger H^{\otimes n} \ket{0^n}\\
    &= U_\pi \left[
    \prod_{q=0}^{p-1}\left(\prod_{i\in V} e^{-i\beta_q X_i} \prod_{(j,k)\in E} e^{-i\gamma_q C_{(j,k)}} \right)
    \right] H^{\otimes n} \ket{0^n}\\
    &= U_\pi \ket{\text{QAOA}_G}.
    \end{align*}
    Lastly, for any $x \in \mathbb{F}_2^n$,
    \begin{align*}
    |\langle x | \text{QAOA}_{G'} \rangle|^2
    &= |\langle x | U_\pi | \text{QAOA}_G \rangle|^2 \\
    &= |\langle \pi^{-1}(x) | \text{QAOA}_G \rangle|^2,
    \end{align*}
    so the output distribution is permuted accordingly.
\end{proof}

\section{Proof of Theorem~\ref{thrm: IQP dist invariant}}\label{appendix: Proof of HW probability multiplicity invariant}

\iqpdistinvariant*

\begin{proof}
Suppose that $G\cong G'$. By Lemma~\ref{lemma: IQP permuting nodes permutes bits} for IQP circuits, and Lemma~\ref{lemma: QAOA permuting nodes permutes bits} for QAOA circuits, there exists a permutation $\pi\in S_n$ such that
\[
P_{G'}(y) = P_G(\pi(y)) \qquad \forall y\in\{0,1\}^n.
\]
The Hamming weight is invariant under the permutation $\pi$. Furthermore, since $\pi$ is a bijection, it preserves the number of occurrences of each probability value within each Hamming weight group. Therefore,
\[
\mathcal{M}_{G}^k(p) = \mathcal{M}_{G'}^k(p) \qquad \forall k\in\{0,\dots,n\},\ p\in[0,1].
\]
Taking the contrapositive proves that a difference between the probability multiplicities is sufficient to identify $G$ and $G'$ as non-isomorphic.

It remains to prove the final equivalence. Suppose first that
\[
\mathcal{M}_{G}^k(p) = \mathcal{M}_{G'}^k(p) \qquad \forall k\in\{0,\dots,n\},\ p\in[0,1].
\]
Fix $k$ and define
\[
\mathcal{Y}_k = \{y\in\{0,1\}^n:|y|=k\}.
\]
Since the probability multiplicities agree within $\mathcal{Y}_k$,
there exists a bijection
\[
\phi_k:\mathcal{Y}_k\to\mathcal{Y}_k
\]
such that
\[
P_{G'}(y)=P_G(\phi_k(y)) \qquad \forall y\in\mathcal{Y}_k.
\]
Define a global bijection
\[
\phi:\{0,1\}^n\to\{0,1\}^n
\]
by setting
\[
\phi(y)=\phi_k(y) \qquad \text{whenever }|y|=k.
\]
Since each $\phi_k$ maps $\mathcal{Y}_k$ onto itself, $\phi$ preserves Hamming weight. Moreover,
\[
P_{G'}(y)=P_G(\phi(y)) \qquad \forall y\in\{0,1\}^n.
\]

Conversely, suppose that there exists a Hamming weight preserving bijection $\phi$ satisfying
\[
P_{G'}(y)=P_G(\phi(y)) \qquad \forall y\in\{0,1\}^n.
\]
For every $k$, the map $\phi$ restricts to a bijection from $\mathcal{Y}_k$ onto itself. It therefore preserves the number of occurrences of each probability value within $\mathcal{Y}_k$, giving
\[
\mathcal{M}_{G}^k(p) = \mathcal{M}_{G'}^k(p) \qquad \forall k\in\{0,\dots,n\},\ p\in[0,1].
\]
This proves the result.
\end{proof}

\section{Proof of Lemma~\ref{lemma: IQP becomes Clifford at pi/4}}\label{appendix: proof that IQP becomes Clifford at pi/4}

\iqpbecomesclifford*

\begin{proof}
    The aim is to show that when $\theta = \frac{\pi}{4}$, $e^{i\theta X_i X_j}$ becomes a Clifford gate, so by the Gottesman-Knill Theorem~\cite{Gottesman_1998_GK_theorem}, the IQP circuit can be efficiently simulated classically.
    
    To show that $e^{i\frac{\pi}{4} X_i X_j}$ is a Clifford gate, it suffices to show that it is generated by $H$, $S$ and $CNOT$, since they are the generators of the Clifford group~\cite{Gottesman_1998_Clifford}.
        
    Note that $X=HZH$ and for a unitary operator $U$ and any operator $Y$, 
    \[
    e^{UYU^\dagger} = Ue^YU^\dagger.
    \]
    Applying these facts, we obtain
    \[
    e^{i\frac{\pi}{4} X_i X_j} = (H_i \otimes H_j) e^{i\frac{\pi}{4} Z_i Z_j} (H_i \otimes H_j).
    \]
    So all that remains is to show that $e^{i\frac{\pi}{4} Z \otimes Z}$ is a Clifford gate.
    
    Firstly, $Z\otimes Z = \operatorname{diag}(1,-1,-1,1)$ so, up to an overall phase,
    \begin{align*}
        e^{i\frac{\pi}{4} Z \otimes Z} &= \operatorname{diag}(e^{i\frac{\pi}{4}},e^{-i\frac{\pi}{4}},e^{-i\frac{\pi}{4}},e^{i\frac{\pi}{4}})\\
        &=e^{i\frac{\pi}{4}}\operatorname{diag}(1,-i,-i,1)\\
        &\equiv \operatorname{diag}(1,-i,-i,1)
    \end{align*}
    Note that $S^\dagger \otimes S^\dagger = \operatorname{diag}(1,-i,-i,-1)$ and $C(Z)= \operatorname{diag}(1,1,1,-1).$ Thus,
    \[
        e^{i\frac{\pi}{4} Z \otimes Z} \equiv (S^\dagger \otimes S^\dagger)C(Z).
    \]
    $C(Z) = (I \otimes H)CNOT(I \otimes H),$ and $S^\dagger=S^3$, so the gate $e^{i\frac{\pi}{4} X_i X_j}$ is indeed generated by \(H,S,\text{ and }\mathrm{CNOT}\).
\end{proof}

\end{appendices}
\end{document}